\begin{document}
\title{Multi-votes Election Control by Selecting Rules}
%
%
\author{Fengbo Wang \inst{1,2} \and
Aizhong Zhou \inst{1,3} \textsuperscript{(\Letter)} \and
Jianliang Xu \inst{1,4}}
\authorrunning{F. Wang et al.}
%
\institute{Ocean University of China, Qingdao, Shandong, China \and
\email{wfb@stu.ouc.edu.cn} \and
\email{zhouaizhong@ouc.edu.cn} \and
\email{xjl9898@ouc.edu.cn}}
\maketitle              
\begin{abstract}
We study the election control problem with multi-votes, whe\-re each voter can present a single vote according different views (or layers, we use ``layer" to represent ``view").
For example, according to the attributes of candidates, such as: education, hobby or the relationship of candidates, a voter may present different preferences for the same candidate set. Here, we consider a new model of election control that by assigning different rules to the votes from different layers, makes the special candidate p being the winner of the election (a rule can be assigned to different layers). 
Assuming a set of candidates C among a special candidate ``p", a set of voters V, and t layers, each voter gives t votes over all candidates, one for each layer, a set of voting rules R, the task is to find an assignment of rules to each layer that p is acceptable for voters (possible winner of the election). 
Three models are considered (denoted as sum-model, max-model, and min-model) to measure the satisfaction of each voter. 
In this paper, we analyze the computational complexity of finding such a rule assignment, including classical complexity and parameterized complexity. 
It is interesting to find out that 1) it is NP-hard even if there are only two voters in the sum-model, or there are only two rules in sum-model and max-model; 2) it is intractable with the number of layers as parameter for all of three models; 3) even the satisfaction of each vote is set as dichotomous, 1 or 0, it remains hard to find out an acceptable rule assignment. 
Furthermore, we also get some other intractable and tractable results.

\keywords{Multi-votes  \and Computational complexity \and Control.}
\end{abstract}
\section{Introduction}

Elections are a commonly used mechanism to achieve preference aggregation and have applications in multi-agent settings and political domains. 
This problem also plays a fundamental role in artificial intelligence and social choice~\cite{ref_article1,ref_article2}.
Most cases studied are set to find out a single winner, voting can also be used to select a fixed-size set of winners (multi-winner), called committee.

The first innovation of our work is that we consider the condition where each voter can present multi-votes.
The traditional election only allows each voter to provide a single vote, which is insufficient in many real applications.
For example, a single person has different attributes in different scenes, such as, he is a husband when he accompanies with his wife, and is a teacher when he faces to the students.
It is natural for us to present different preferences among agents from the different viewpoints.
Such as, from a romantic perspective, giving rose to others is better than a candy, while from a perspective of filling the stomach, we often prefer candy to rose.
The conditions where each voter is allowed to present multi-votes studied in Aziz et al.~\cite{ref_article3},
Chen et al.~\cite{ref_article4},
Miyazaki and Okamoto~\cite{ref_article5} and Robert Bredereck et al.~\cite{ref_article25}.
Wen et al.~\cite{ref_article6} studied the multi-preference model of matching. 
A related work of Jain and Talmon~\cite{ref_article7} studied committee selection with multi-modal preferences, which assuming a set of candidates $A$, a set of voters $V$, and $\ell$ layers,
where each voter $v\in V$ has ordinal preferences over the alternatives for each layer separately, the task is to select an acceptable committee $S\subset A$ of size $k$.

We also consider the election with uncertainty, which is another hot topic in the research of social choice. 
In the context of winner determination, perhaps the most prominent problem in this category is vote uncertainty, the possible/necessary winner problem~\cite{ref_article8}, where the voting rule is public information, but for each voter, only a partial order over the candidates is known; the goal is to determine if a candidate wins the election for some way (the possible winner) or for every way (the necessary winner) of completing the voters’ preferences; a probabilistic variant of this problem has been considered~\cite{ref_article9}.
Kocot considered if there is a committee that meets or exceeds the respective lower bound with respect to each of the rules~\cite{ref_article10}.
Uncertainty about the voting rules has been recently investigated by Baumeister et al.~\cite{ref_article11}, who also consider the situation where the voting rule will be chosen from a fixed set.
Maciej Kocot et al.~\cite{ref_article10} has studied winner determination and voting manipulation under uncertainty.
Edith Elkind and G\'{a}bor Erd\'{e}lyi~\cite{ref_article12} studied the complexity of manipulation for the setting where there is uncertainty about the voting rule: the manipulator(s) know that the election will be conducted using a voting rule from a given list, and need to select their votes so as to succeed no matter which voting rule will eventually be chosen.
A similar work has been in Conitzer et al~\cite{ref_article13}.
We follow this line and continue to consider the scene where the voting rules are uncertain, and our work is to find a set of satisfying rules assigned to each layer.

Another contribution of this paper is that we consider a new model of election control where assigning rules to each layer to determine the election winner (the satisfaction of the vote is achieved by the assigned rule).
It can be seen as an attack to control the winner of the election.
The computational complexity of elections under attacks has been studied extensively, since Bartholdi et al.~\cite{ref_article14} introduced the usage of computational complexity as a barrier to protect elections against different manipulative actions.
The common attacks among manipulation, control, and bribery.
See the book chapters~\cite{ref_article15,ref_article16} for recent surveys of related results.
Here, we focus on the control attacks on elections, where an election chair attempts by doing some operations to make a special candidate win the election, the {\em constructive control} model~\cite{ref_article14}, or lose the election, the {\em destructive control} model~\cite{ref_article17}.
The common operations include adding candidates, adding votes, deleting candidates, deleting votes, partition of candidates, and partition of votes et al.
Complexity results of control problems have been obtained for many election systems such as Plurality,
Condorcet, Approval Voting, Copeland, Schulze Voting and Borda~\cite{ref_article15,ref_article18,ref_article19,ref_article20}.
Operations such as partitioning the candidates or votes have also been consider~\cite{ref_article20,ref_article21}.
Furthermore, control problems have also been studied in connection to some special vote structures such as single-peaked or single-dived~\cite{ref_article22}.
In this paper, we consider another operation that by selecting rules to different votes to make a special candidate being the winner of the election (the constructive control case). 
Considering the computational complexity of this new model of election control can reduce the impact on the fairness of election and ensure the rationality of the winner. 

In summary, our work combines the characteristic of multi-votes, uncertainty and control together, and studies the computational complexity of the problem, called \emph{Multi-votes Election Control By Selecting Rules, MECSR} for short.
The multi-votes election provides rule uncertainty with a well existence opportunity.
When each voter provides a single vote, only a rule is applied to this voter. So, the task is to chose a rule from the rule set or to determine which vote is applied with the chosen rule. 
However, when each voter provides multi-votes, it presents a possibility that multi-votes are applied with different rules, and the task is to achieve a satisfying rule assignment (the satisfaction of voters is enough) to different rules. 
To measure the satisfaction of each voter, we consider three models, \emph{Sum-Model}, \emph{Max-Model}, and \emph{Min-Model}, and find out that the {\sc MECSR} problem is NP-hard for all of the three models.
We continue to study the parameterized complexity with the three models, and get some tractable and intractable results (shown in table.1).
It is interesting to find out that 1) it is NP-hard even if there are only two voters in sum-model, or two rules in sum-model or max-model, 2) it is intractable with the number of layers as parameter for all of the three rules, 3) even the satisfaction of each vote is set as dichotomous, 1 or 0, it is still hard to find out an acceptable rule assignment.
In the following of this paper, we first present the preliminaries in section 2, and show the details of classical and parameterized complexity results in section 3.
Finally, we summarize our work and present some interesting future work in  section 4.

\begin{table*}
\label{Results}
\centering\caption{In this table, we summarize our results including classical and parameterized complexity. The $n$ denotes the number of voters, $t$ denotes the number of votes presented by each voter (the number of layers), $\ell$ denotes the number of rules, and $\alpha$ denotes the number of satisfied voters. It is trivially in P when there is only one voter, one layer, or one rule. And when all of the voters all voters accept the winner with min-model ($n=\alpha$), we can obviously find out an acceptable rule assignment in polynomial-time. Therefore, it is FPT of {\sc MECSR} problem with the number of voters as parameter. All of the results shown in this table are reached, even if the satisfaction of each vote is set as dichotomous, $0$ or $1$.}
\resizebox{\textwidth}{!}{
\begin{tabular}{|c|c|c|c|c|c|c|}
\hline
{Model} & {Classical Complexity}
& \multicolumn{4}{c|}{Parameterized Complexity}\\
\hline
{} &{}& {$n$} &{$t$}
&{$\ell$}&{$\alpha$}\\
\hline
\multirow{3}{*}{Sum} &\multirow{2}{*}{\bf NP-hard} & {\bf $n\leq t:$ W[2]-hard } & \multirow{2}{*}{\bf W[2]-hard} & \multirow{2}{*}{\bf $\ell=2$:NP-hard} &\multirow{2}{*}{\bf W[1]-hard}\\
&{}& {\bf $n>t:$ FPT}&{}& {}&{}\\
&{\bf [Theorem~\ref{Sum-t}]}&{\bf [Theorem~\ref{sum-n}]}&{\bf [Theorem~\ref{Sum-t}]}&{\bf [Theorem~\ref{Sum-n and l}]}&{\bf [Theorem~\ref{sum-alpha}]}\\
\hline
\multirow{3}{*}{Max} &\multirow{2}{*}{\bf NP-hard} & {\bf $n\leq t:$ W[2]-hard } & \multirow{2}{*}{\bf W[2]-hard} & \multirow{2}{*}{\bf $\ell=2$:NP-hard} &\multirow{2}{*}{\bf W[1]-hard}\\
&{}& {\bf $n>t:$ FPT}&{}& {}&{}\\
&{\bf [Theorem~\ref{max-t}]}&{\bf [Theorem~\ref{max-n}]}&{\bf [Theorem~\ref{max-t}]}&{\bf [Theorem~\ref{max-l}]}&{\bf [Theorem~\ref{max-alpha}]}\\
\hline
\multirow{3}{*}{Min} &\multirow{2}{*}{\bf NP-hard} & \multirow{3}{*}{FPT} & \multirow{2}{*}{\bf W[1]-hard} &{\bf $\ell\leq t:$ W[1]-hard} & \multirow{2}{*}{\bf W[1]-hard}\\
{}&{}&{}&{}&{\bf $\ell> t:$ FPT}&{}\\
&{\bf [Theorem~\ref{min t-alpha}]}&{}&{\bf [Theorem~\ref{min t-alpha}]}&{\bf [Theorem~\ref{min-ell}]}&{\bf [Theorem~\ref{min t-alpha}]}\\
\hline
\end{tabular}
}
\label{table:truncated}
\end{table*}

\section{Preliminaries}
A traditional election denoted as $E=(C,V)$ among $m$ candidates in $C$ and $n$ votes in $V$ from $n$ voters.
The aim of the election is to select a single satisfied candidate $c$ from $C$ to be the winner, according to the votes in $V$.
Here, we analyse a special model of election where each voter gives $t$ votes over all candidates $C$ from in $t$ layers, such as experience or education, with each vote corresponding to one layer.
The vote set $V$ contains $n$ subset $V_i(1\leq i \leq n)$, where each subset corresponds to a voter, $V=\bigcup_{1\leq i\leq n} V_i$; and each subset $V_i$ contains $t$ votes $v_i^j(1\leq j\leq t)$, each vote corresponds to a layer of the voter, $V_i=\bigcup_{1\leq j\leq t} v_i^j$. 
The vote $v_i^j$ is presented by the $i-$th voter from the $j-$th layer.
To measure the satisfaction of vote $v$ with the chosen winner $c$ being the winner, we often think about the rules (such as: Borda) which can calculate a value, ${\tt Sat}(c,v,r)$ with rule.
The satisfaction of a voter ${\tt Sat}(V)$ is obtained according to the $t$ vote satisfactions ${\tt Sat}(c,v,r)$ with $v \in V$. 
When Sat(V) reaches a given threshold d, we call the voter \textit{accepts} the winner $c$, otherwise, we call the voter \textit{rejects} the winner $c$.
The satisfaction of each voter is determined by combination of his $t$ votes and the chosen winner $c$. 
Hereby, we consider the condition called as \emph{Multi-votes Election Control By Selecting Rules(MECSR)}  that given a set of rules $R=\{r_1,\cdots,r_{\ell}\}$, the satisfaction threshold $d$ and a special candidate $p\in C$, is there an assignment of rules to each layer to make sure that the satisfaction of each voter is at least $d$ with $p$ being the winner?
There are some notes about our work described as follows:
1). Same layers of different voters share common rules, and a single rule can be assigned to different layers; 
2). We do not require $p$ to be the unique winner,
which means the rule assignment may potentially result in another candidate being the winner, such an outcome is acceptable; 
3). Here, we consider the rules which can calculate ${\tt Sat}(c,v)$ in polynomial time with given candidate $c$ and vote $v$;   
4). Although we just consider the special candidate $p$ being the winner here, our work can be applied to the committee election with the rules which can get a value of each candidate from the votes directly, such as Borda, plurality or veto.

\subsection{Problem Definition}
Here, We define the central problem of this paper.
\begin{quote}
    {\bf Multi-votes Election Control By Selecting Rules(MECSR)}\\
    {\bf Input}: An election $E=(C, V, R, t)$, each voter provides $t$ votes over all candidates in $C$ where each vote for each layer, a set $R$ of rules, a special candidate $p \in C$, and two positive integers $d$ and $\alpha$.\\
    {\bf Question}: Is there an assignment of rules in $R$ for each layer that the at least $\alpha$ voters accept the winner $p$ (${\tt Sat}(V)\geq d$)?
\end{quote}
Since the special candidate $p\in C$, each vote $v\in V$, and each rule $r\in R$ are part of the input, we can calculate ${\tt Sat}(p,v,r)$ in polynomial time, denoted as ${\tt Sat}(v,r)$.
Therefore, we can consider ${\tt Sat}(v,r)$ as part of the input directly, without specifying the formats of each vote $v$ and each rule $r$.

In this paper, we investigate three models of calculating voter satisfaction, which have also been studied by Aziz et al. \cite{ref_article3}: 
\begin{itemize}
    \item Sum-Model: The satisfaction of each voter is the total satisfaction of all $t$ layers, ${\tt Sat}(V_i)=\sum_{j=1}^{t} {\tt Sat}(v_i^j)$;
    \item Max-Model: The satisfaction of each voter is the maximal satisfaction among the $t$ layers, ${\tt Sat}(V_i)=\max\{{\tt Sat}(v_i^j)| 1\leq j\leq t\}$;
    \item Min-Model: The satisfaction of each voter is the minimal satisfaction among the $t$ layers, ${\tt Sat}(V_i)=\min\{{\tt Sat}(v_i^j)| 1\leq j\leq t\}$.
\end{itemize}
The sum-model measures the total satisfaction of a voter and does not consider the individual satisfaction of each vote; in the max-model, voters accept the chosen candidate when the satisfaction is enough from at least one vote; and for the min-model, voters accept the chosen candidate only if the satisfaction is enough from all votes.

\subsection{Parameterized Complexity}
Parameterized complexity allows us to give a more refined analysis of computational problems and in particular,
can provide a deep exploration of the connection between the problem complexity and various problem-specific parameters.
A fixed-parameter tractable (FPT) problem admits an $O(f(k) \cdot |I|^{O(1)})$-time algorithm, where~$I$ denotes the
whole input instance, $k$ is the parameter, and~$f$ can be any computable function.
Fixed-parameter intractable problems can be classified into many complexity classes, where the most fundamental ones are W[1] and W[2].
A problem is para-NP-hard with respect to parameter $k$, when the problem remains NP-hard even if $k$ is a  fixed constant.
For more details on parameterized complexity, we refer to ~\cite{ref_book23,ref_book24}.

\section{Classical and parameterized complexity}
In this section, we show the computational complexity of {\sc MECSR} problem with sum-model, max-model, or min-model.
It is trivial {\sc MECSR} problem is in P when there is only one rule, one layer, or one voter.
Otherwise, {\sc MECSR} problem is NP-hard for all of the three models.
Furthermore, we achieve some intractable results and two tractable results.
For ease of the description, we use $j\in[n]$ to represent $1\leq j\leq n$, use $j\in [n_1,n_2]$ to represent $n_1\leq j\leq n_2$,  where $j$ is a non-negative integer; use $N[v]$ to represent the neighborhood set of $v$, which includes the vertex $v$ itself along with its adjacent vertices.

\subsection{Complexity with Sum model}
In this section, we present the complexity results for the {\sc MECSR} problem with sum-model. 
In sum-model, the satisfaction of each voter is calculated by the sum of satisfactions from all $t$ layers, denoted as ${\tt Sat}(V_i) = \sum_{j=1}^t {\tt Sat}(v_{i}^j)$. 
We achieve 1) {\sc MECSR} problem with sum-model is NP-hard even when there are only two rules; 2) it is W[2]-hard with respect to the number of layers $t$; 3)it is W[1]-hard with the number of satisfied voters $\alpha$ as the parameter.

\begin{theorem}
\label{Sum-t}
The {\sc MECSR} problem with sum-model is NP-hard and is W[2]-hard with respect to the number of layers $t$.
\end{theorem}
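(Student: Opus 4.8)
The plan is to reduce from {\sc Dominating Set}, which is NP-hard and W[2]-hard when parameterized by the solution size $k$; in the constructed instance the number of layers $t$ will equal $k$, so both claims follow simultaneously. Given a graph $G=(V_G,E_G)$ with $V_G=\{w_1,\dots,w_m\}$ and a target size $k$, I would create $t=k$ layers and a rule set $R=\{r_1,\dots,r_m\}$ with one rule $r_s$ per vertex $w_s$, so that assigning $r_s$ to a layer means placing $w_s$ into the solution. I introduce one voter $u_i$ for each vertex $w_i$ to encode its domination requirement. Since the preliminaries let me treat ${\tt Sat}(v,r)$ as part of the input, I set the satisfaction of voter $u_i$'s vote under rule $r_s$ to $1$ if $w_i\in N[w_s]$ and to $0$ otherwise (a dichotomous instance), and finally put $d=1$ and $\alpha=m$ so that every voter must accept.

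For correctness, note that a rule assignment chooses, for each of the $k$ layers, one vertex, hence a multiset $D$ of at most $k$ distinct vertices. In the sum-model the satisfaction of voter $u_i$ is the number of layers $\lambda$ whose selected vertex $w_{s(\lambda)}$ satisfies $w_i\in N[w_{s(\lambda)}]$, which is at least $d=1$ exactly when some chosen vertex dominates $w_i$. Therefore all $m$ voters accept if and only if $D$ dominates every vertex of $G$, i.e. if and only if $G$ has a dominating set of size at most $k$; conversely any dominating set of size $\le k$ can be padded by repeating an arbitrary vertex to fill all $k$ layers, yielding an accepting assignment. The reduction runs in polynomial time and satisfies $t=k$, which transfers both NP-hardness and W[2]-hardness with respect to $t$.

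The step that needs the most care is forcing the sum-model to behave like an existential ``at least one vertex dominates'' condition: because the total satisfaction accumulates over layers, a vertex dominated by several chosen vertices contributes a value strictly larger than $1$, yet the threshold $d=1$ must still classify it as accepted while a value of $0$ is rejected. Working with dichotomous $0/1$ satisfactions together with $d=1$ pins this down cleanly, and it simultaneously yields the stronger assertion from the results table that the hardness already holds under dichotomous satisfactions. I would also verify that requiring a rule on \emph{every} layer does not distort the correspondence with {\sc Dominating Set}: since only the set of chosen vertices matters for domination, the repetitions introduced by filling all $k$ slots are harmless in both directions of the equivalence.
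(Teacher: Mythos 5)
Your reduction is essentially identical to the paper's: one voter and one rule per vertex of the {\sc Dominating Set} instance, $t=k$ layers, dichotomous satisfactions ${\tt Sat}=1$ iff $w_i\in N[w_s]$, threshold $d=1$, and all voters required to accept, with the same observation that repetitions among the $k$ chosen rules are harmless. The argument is correct (and your choice of $\alpha$ as the number of voters is in fact cleaner than the paper's $\alpha:=n'$, which appears to be a typo for the number of vertices).
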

\begin{proof}
We prove the theorem by reducing from {\sc Dominating Set} problem, which given a graph~$\mathcal{G}=(\mathcal{V},\mathcal{E})$ and an integer~$k'$ where $|\mathcal{V}|=m'$ and $|\mathcal{E}|=n'$, asks for a size-$\leq k'$ vertex subset $\mathcal{V'} \subseteq \mathcal{V}$ where $\forall v\in \mathcal{V}, \exists v' \in \mathcal{V'}, v\in N[v']$.
It is known that {\sc Dominating set} problem is NP-hard and is W[2]-hard with respect to the size of $\mathcal{V'}$~\cite{ref_book24}.
We construct an {\sc MECSR} instance~$(E=(C,V,R,t), \alpha, d)$ from~$(\mathcal{G}=(\mathcal{V},\mathcal{E}), k')$ as follows.

For each vertex $v_i\in \mathcal{V}, i\in [m']$, we construct a voter $V_i$ and a rule $r_i$, $V=\bigcup_{i=1}^{m'} V_i$, $R=\bigcup_{i=1}^{m'} r_i$.
There are $k'$ layers in total, $t=k'$.
For each voter $V_i, i\in [m']$, we construct $k'$ votes, $V_i=\bigcup_{j=1}^{k'} v_i^j$.
The satisfaction of vote $v_i^j (j\in [k'])$ with rule $r_k$ is set to 1, if the corresponding vertices $v_i$ and $v_k$ satisfy $v_i\in N[v_k]$ in $\mathcal{G}$, ${\tt Sat}(v_i^j,r_k)=1$; otherwise, the satisfaction is set to 0, ${\tt Sat}(v_i^j,r_k)=0$.
$$
{\tt Sat}(v_i^j,r_k)=\left\{
\begin{aligned}
&&1 &,& v_i\in N[v_k], \\
&&0 &,& v_i\notin N[v_k]. \\
\end{aligned}
\right.
$$
Note that all $t=k'$ votes of one voter are the same.
Let $d:=1$, $\alpha:=n'$.
Now we prove that there is a size-$k'$ dominating set in $\mathcal{G}$ if and only if there is a rule assignment solution of {\sc MECSR} problem with sum-model.

``$\Longrightarrow$'': If there is a size-$\leq k'$ dominating set $DS$ in $\mathcal{G}$, $|DS|\leq k'$ and $\forall v\in \mathcal{V}, \exists v'\in DS, v\in N[v']$.
Let $R'$ be the set of rules corresponding to the vertices in $DS$, $|R'| \leq k'$, that is, $\forall v_i\in DS, r_i\in R'$.
For each vertex $v_i\in DS$, $r_i$ is assigned to one layer.
It means each rule in $R'$ is assigned to one layer.
Since all the $k'$ votes of one voter are same, the assignment order of the chosen $k'$ rules has no effect on the satisfaction of each voter.
Since each vertex $v_i \in \mathcal{V}$ is adjacent to at least one vertex $v_{k} \in DS$, it means for each voter the satisfaction of at least one layer with the rule $r_k$ is 1, that is $\exists r_k\in R', {\tt Sat}(p,v_i^j,r_k)=1$.
So, for each voter $V_i$, the total satisfaction of $V_i$ is at least $d=1$:
\begin{equation*}
\begin{split}
     {\tt Sat}(V_i)=\sum_{j=1}^{k'}{\tt Sat}(v_i^j,r_{j'})\geq {\tt Sat}(v_i^j,r_k)=1.
\end{split}
\end{equation*}
Therefore, the {\sc MECSR} instance has a rule assignment to make $p$ being the possible winner of the election.

``$\Longleftarrow$'': Suppose there is a rule assignment of {\sc MECSR}, where the satisfaction of each voter $V_i$ is at least $d=1$. 
Let $R'$ be the set of rules of the rule assignment, $|R'|\leq k'$ and $r_{j'}\in R'$ is the rule assigned to $j-$th layer.
Then, for each voter $V_i$, it must hold ${\tt Sat}(V_i)=\sum_{j=1}^{k'}{\tt Sat}(v_i^j,r_{j'})\geq d=1$.
Since the satisfaction for a voter of each layer can only be $1$ or $0$, there must be a layer where the satisfaction is $1$ with the assigned rule $r_{j'}$.
It means:
\begin{equation*}
\begin{split}
     \exists r_{j'}\in R', {\tt Sat}(v_i^j,r_{j'})=1.
\end{split}
\end{equation*}
So, for each voter $V_i$, the corresponding vertex $v_i$ must hold $v_i \in [v_{j'}]$ with satisfying ${\tt Sat}(v_i^j,r_{j'})=1$.
Therefore, the set of vertices corresponding to the rules in $R'$ is a size-$\leq k'$ dominating set of~$\mathcal{G}$.
This completes the proof of this theorem.

\end{proof}

\begin{theorem}
\label{Sum-n and l}
The {\sc MECSR} problem with sum-model is NP-hard even if there are only two rules.
\end{theorem}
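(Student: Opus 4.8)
The plan is to reduce from the NP-complete \textsc{3-SAT} problem, exploiting the observation that with only two available rules a rule assignment to the layers is exactly a binary choice per layer. Given a 3-CNF formula $\phi$ over variables $x_1,\dots,x_t$ with clauses $C_1,\dots,C_n$, I would construct an \textsc{MECSR} instance $(E=(C,V,R,t),\alpha,d)$ with exactly two rules $R=\{r_1,r_2\}$, where $r_1$ encodes the truth value ``true'' and $r_2$ encodes ``false''. Each variable $x_j$ becomes a layer $j\in[t]$, and each clause $C_i$ becomes a voter $V_i$. Since only two rules exist, assigning a rule to each of the $t$ layers is precisely choosing a truth assignment for the $t$ variables.

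I would then set the (dichotomous) satisfaction values to mirror literal satisfaction. For the vote $v_i^j$ of voter $V_i$ on layer $j$: if $x_j$ occurs positively in $C_i$, set ${\tt Sat}(v_i^j,r_1)=1$ and ${\tt Sat}(v_i^j,r_2)=0$; if $x_j$ occurs negatively in $C_i$, set ${\tt Sat}(v_i^j,r_1)=0$ and ${\tt Sat}(v_i^j,r_2)=1$; and if $x_j$ does not occur in $C_i$, set both values to $0$. Finally, let $d:=1$ and $\alpha:=n$, so that the instance is a yes-instance exactly when \emph{every} voter accepts $p$. This construction is clearly polynomial-time computable and uses only two rules, as required.

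The correctness argument proceeds through the sum-model directly. Writing $r_{x_j}$ for the rule chosen on layer $j$, voter $V_i$ accepts $p$ iff
\begin{equation*}
{\tt Sat}(V_i)=\sum_{j=1}^{t}{\tt Sat}(v_i^j,r_{x_j})\geq 1,
\end{equation*}
which, by the construction of the satisfaction matrix, holds iff at least one literal of clause $C_i$ is satisfied under the truth assignment induced by the rule assignment. Hence all $n=\alpha$ voters accept $p$ iff every clause of $\phi$ is satisfied, i.e., iff $\phi$ is satisfiable. For the forward direction I would read a truth assignment off any rule assignment satisfying all voters, and for the backward direction I would translate a satisfying assignment of $\phi$ into the corresponding choice of $r_1$/$r_2$ on each layer.

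The reduction itself is simple, so there is no deep obstacle; the one point requiring care is the coupling between layers and voters. A single rule choice on layer $j$ simultaneously fixes the contribution of $x_j$ to every voter whose clause mentions $x_j$, and this shared effect is exactly what encodes the global consistency of a truth assignment (a variable cannot be both true and false across different clauses). I would make sure the dichotomous satisfaction matrix reflects this consistently, and that the threshold $d=1$ together with $\alpha=n$ faithfully captures ``every clause has a satisfied literal.''
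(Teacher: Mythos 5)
Your reduction is correct, but it takes a genuinely different route from the paper. The paper proves this theorem by a reduction from \textsc{Dominating Set}: it builds $m'+1$ voters and $2m'$ layers, where the first $m'$ layers encode the vertices (rule $r_1$ on layer $j$ means ``put $v_j$ in the dominating set''), the padding layers give every vertex-voter a baseline satisfaction of $m'-1$ against a threshold $d=m'$, and an auxiliary ``budget'' voter $V_{m'+1}$ is wired so that it only reaches the threshold if at most $k'$ of the first $m'$ layers receive $r_1$. You instead reduce from \textsc{3-SAT} with the variable-as-layer, clause-as-voter, two-rules-as-truth-values encoding, $d=1$ and $\alpha=n$; this works because with dichotomous satisfactions and threshold $1$, a nonnegative sum is at least $1$ exactly when its maximum is, so the sum-model and max-model coincide on your instances. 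In fact your construction is essentially the one the paper itself uses later for the max-model with two rules (Theorem~\ref{max-l}), so your argument unifies the two results and is considerably simpler than the paper's budget-voter gadget; what the paper's version additionally exhibits is hardness at a large threshold ($d=m'$) rather than $d=1$, though the theorem statement does not require that. One small point to make explicit: if a clause contains complementary literals $x_j$ and $\overline{x_j}$, your case distinction for ${\tt Sat}(v_i^j,\cdot)$ is ambiguous; either assume (as is standard) that such tautological clauses are removed in preprocessing, or set both values to $1$ for that layer, which keeps the equivalence intact.
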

\begin{proof}
We prove the theorem by reducing from {\sc Dominating Set} problem.
We construct an {\sc MECSR} instance~$(E=(C,V,R,t), \alpha, d)$ from~$(\mathcal{G}=(\mathcal{V},\mathcal{E}), k')$ as follows.

For each vertex $v_i\in \mathcal{V}(i\in [m'])$, we construct a voter $V_i$.
We also construct another voter $V_{m'+1}$, $V=\bigcup_{i\in [m'+1]} V_{i}$.
There are two rules, $r_1$ and $r_2$, $R=\{r_1,r_2\}$, and $2m'$ layers, $t=2m'$.
The satisfaction of each vote is set as follows:
\begin{itemize}
    \item For the $j-$th layer with $j\in [m']$:
    \begin{itemize}
        \item ${\tt Sat}(v_i^j,r_1)=1$, $i\in [m']$ and $v_i\in N[v_j]$;
        \item ${\tt Sat}(v_i^j,r_1)=0$, $i\in [m']$ and $v_i\notin N[v_j]$;
        \item ${\tt Sat}(v_i^j,r_2)=0$, $i\in [m']$;
        \item ${\tt Sat}(v_i^j,r_1)=0$, ${\tt Sat}(v_i^j,r_2)=1$, $i= m'+1$.
    \end{itemize}
    \item For the $j-$th layer with $j\in [m'+1, 2m'-1]$:
    \begin{itemize}
        \item ${\tt Sat}(v_i^j,r_1)={\tt Sat}(v_i^j,r_2)=1$, $i\in [m']$;
        \item ${\tt Sat}(v_i^j,r_1)={\tt Sat}(v_i^j,r_2)=1$, $i=m'+1$ and $j\in [m'+1,m'+k']$;
        \item ${\tt Sat}(v_i^j,r_1)={\tt Sat}(v_i^j,r_2)=0$, $i=m'+1$ and $j\in [m'+k'+1,2m'-1]$;
    \end{itemize}
    \item For the $j-$th layer with $j=2m'$: 
    \begin{itemize}
    \item ${\tt Sat}(p,v_i^{2m'},r_1)={\tt Sat}(p,v_i^{2m'},r_2)=0$, $i\in [m'+1]$.
    \end{itemize}
\end{itemize}
Let $\alpha:=m'+1$, $d:=m'$. 
Now, we show that the {\sc Dominating set} instance has a size-$\leq k'$ dominating set if and only if there is a rule assignment of $r_1$ and $r_2$ of {\sc MECSR} instance with sum-model such that $\forall i\in [m'+1]$, $\sum_{j=1}^{2m'} {\tt Sat}(v_i^j) \geq m'$.

``$\Longrightarrow$'':
Suppose that there exists a size-$\leq k'$ dominating set $\mathcal{V'}$ in $\mathcal{G}$. 
For $j\in [m']$, $r_1$ is assigned to the $k'$ layers corresponding to the vertices in $\mathcal{V'}$, while $r_2$ is assigned to the other layers.
For example, if $v_2$ is in $V'$, $r_1$ is assigned to the second layer; otherwise, $v_2$ is not in $V'$, $r_2$ is assigned to the second layer.
For $j\in [m'+1,2m']$, the allocation of $r_1$ and $r_2$ is random.
Since $k'$ layers corresponding to the vertices in $\mathcal{V'}$ are assigned with rule $r_1$, and $m-k'$ layers are assigned with rule $r_2$, it holds:
\begin{itemize}
    \item When $j \in [m']$:
    \begin{itemize}
        \item $\sum\limits _{v_j \in V'}{\tt Sat}(v_i^j,r)=\sum\limits _{v_j \in V'}{\tt Sat}(v_i^j,r_1)+\sum\limits _{v_j \notin V'}{\tt Sat}(v_i^j,r_2) \geq 1+0=1$, $i \in [m']$;
        \item $\sum\limits _{v_j \in V'}{\tt Sat}(v_i^j,r)=\sum\limits _{v_j \in V'}{\tt Sat}(v_i^j,r_1)+\sum\limits _{v_j \notin V'}{\tt Sat}(v_i^j,r_2) = 0+(m'-k')=m'-k'$, $i = [m'+1]$.
    \end{itemize}
    \item When $j \in [m'+1,2m']$:
    \begin{itemize}
        \item $\sum\limits _{j \in [m'+1,2m']}{\tt Sat}(v_i^j,r)=\sum\limits _{j \in [m'+1,2m'-1]}{\tt Sat}(v_i^j,r)+{\tt Sat}(v_i^{2m'},r) = m'-1+0=m'-1$, $i \in [m']$;
        \item $\sum\limits _{j \in [m'+1,2m']}{\tt Sat}(v_i^j,r)=\sum\limits _{j \in [m'+1,m'+k]}{\tt Sat}(v_i^j,r)+\sum\limits _{j \in [m'+k'+1,2m']}{\tt Sat}(v_i^j,r) = k'+0=k'$, $i = m'+1$.
    \end{itemize}
    \item For the $j-$th layer with $j=2m'$: 
    \begin{itemize}
    \item ${\tt Sat}(p,v_i^{2m'},r_1)={\tt Sat}(p,v_i^{2m'},r_2)=0$, $i\in [m'+1]$.
    \end{itemize}
\end{itemize}
Note that, the satisfaction $v_i^j$ remains constant regardless of the $r_1$ or $r_2$ is assigned to $j-$th layer, when $j \in [m'+1,2m']$.
So, we can get:
\begin{itemize}
    \item ${\tt Sat}(V_i)=\sum\limits _{j \in [m']}{\tt Sat}(v_i^j,r)+\sum\limits _{j \in [m'+1,2m']}{\tt Sat}(v_i^j,r) \geq 1+m'-1=m'=d$, $i \in [m']$
    \item ${\tt Sat}(V_i)=\sum\limits _{j \in [m']}{\tt Sat}(v_i^j,r)+\sum\limits _{j \in [m'+1,2m']}{\tt Sat}(v_i^j,r) \geq 1+m'-1=m'=d$, $i = [m'+1]$
\end{itemize}
Therefore, the total satisfaction of $v_i^j$ is at least $m'$ for all $i\in [m'+1]$, $j\in [m'+1,2m']$.

``$\Longleftarrow$'':
Suppose there is a rule assignment of {\sc MECSR}, where the satisfaction of each voter $V_i$ is at least $m'$.
According to the votes of $V_{m'+1}$, the total satisfaction of $v_{m'+1}^j$ is always $k'$ regardless of $r_1$ or $r_2$ is assigned to the $j$-th layer, $j\in [m'+1,2m']$. 
That is:
\begin{equation*}
\begin{split}
     &\sum\limits _{j \in [m'+1,2m']}{\tt Sat}(v_{m'+1}^j,r_1) =\sum\limits _{j \in [m'+1,2m']}{\tt Sat}(v_{m'+1}^j,r_2) \\
     &=\sum\limits _{j \in [m'+1,m'+k']}{\tt Sat}(v_{m'+1}^j,r_1) +\sum\limits _{j \in [m'+k'+1,2m']}{\tt Sat}(v_{m'+1}^j,r_2) \\
     &= k'+0=k'.
\end{split}
\end{equation*}
${\tt Sat}(p,v_{m'+1}^j,r_1)={\tt Sat}(p,v_{m'+1}^j,r_2)$.
To reach the threshold $d=m'$, for the $j$-th layer, $j \in [m']$, at most $k'$ layers can be assigned with $r_1$. 
For the voters of $V_{i}$, $i\in [m']$, it always holds:
\begin{itemize}
    \item $\sum\limits _{j \in [m'+1,2m']}{\tt Sat}(v_i^j,r)+\sum\limits _{j \in [m'+1,2m'-1]}{\tt Sat}(v_i^{2m'},r) = m'-1$, $i \in [m']$
    \item $\sum\limits _{j \in [m'+1,2m']}{\tt Sat}(v_i^j,r)=\sum\limits _{j \in [m'+1,m'+k']}{\tt Sat}(v_i^j,r)+\sum\limits _{j \in [m'+k'+1,2m']}{\tt Sat}(v_i^j,r) \\= k'+0=k'$, $i = [m'+1]$
\end{itemize}
Therefore, for each $j$-th layer with $j\in [m']$, at least one layer is assigned with $r_1$ to receive the satisfaction of $1$ for $V_i(i \in [m'])$, and at most $k'$ layers can be assigned with $r_1$ to ensure that the total satisfaction $\sum\limits _{j \in [m']}{\tt Sat}(v_{m'+1}^j,r) \geq m'-k'$.
Since the $j$-th layer ($j\in [m']$) corresponds to the vertex $v_j$ in the graph $\mathcal{G}$, we have ${\tt Sat}(v_i^j,r_1)=1$ only when the corresponding vertex $v_i\in N[v_j]$, where $N[v_j]$ represents the neighborhood set of $v_j$.
Therefore, the instance $(E=(C,V,R,t), \alpha, d)$ has a solution of rule assignment if and only if the graph $(\mathcal{G}=(\mathcal{V},\mathcal{E}), k')$ has a size-$\leq k'$ dominating set.

\end{proof}

\begin{theorem}
\label{sum-alpha}
The {\sc MECSR} problem with sum-model is W[1]-hard with respect to the number of satisfied voters $\alpha$.
\end{theorem}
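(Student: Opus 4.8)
The plan is to give a parameterized reduction from {\sc Multicolored Clique}, which is known to be W[1]-hard with respect to the number of colors~\cite{ref_book24}. An instance consists of a graph $\mathcal{G}=(\mathcal{V},\mathcal{E})$ whose vertices are partitioned into $k$ color classes $\mathcal{V}_1,\dots,\mathcal{V}_k$, and the question is whether there is a clique containing exactly one vertex from each class. The crucial point is to make the parameter $\alpha$ of the constructed {\sc MECSR} instance depend only on $k$; I would set $\alpha:=\binom{k}{2}$, so that a bounded number of satisfied voters already certifies a clique.

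For the construction I would use $t:=k$ layers, where layer $c$ represents color class $\mathcal{V}_c$, and one rule $r_u$ for every vertex $u\in\mathcal{V}$, so that assigning $r_u$ to layer $c$ encodes ``pick $u$ as the class-$c$ vertex''. For every edge $e=\{u,w\}\in\mathcal{E}$ joining distinct classes, say $u\in\mathcal{V}_c$ and $w\in\mathcal{V}_{c'}$ with $c\neq c'$, I would create one voter $V_e$ and set ${\tt Sat}(v_e^{c},r_u)=1$, ${\tt Sat}(v_e^{c'},r_w)=1$, and ${\tt Sat}(v_e^{j},r)=0$ for every other layer $j$ and every other rule $r$. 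Finally I would set $d:=2$. Thus each voter $V_e$ is active on exactly two layers, and in the sum-model its satisfaction reaches $d=2$ if and only if $r_u$ is assigned to layer $c$ and $r_w$ is assigned to layer $c'$; this is exactly a test that both endpoints of $e$ are simultaneously selected. The whole instance has $O(|\mathcal{E}|)$ voters and is built in polynomial time, and uses only $0/1$ satisfactions, matching the dichotomous setting.

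For correctness, the forward direction is immediate: given a multicolored clique $\{x_1,\dots,x_k\}$ with $x_c\in\mathcal{V}_c$, assigning $r_{x_c}$ to layer $c$ makes every voter $V_{\{x_c,x_{c'}\}}$ reach satisfaction $2$, so all $\binom{k}{2}$ edge-voters corresponding to the clique edges accept $p$. For the reverse direction I would use a counting/injectivity argument: since each voter has nonzero satisfaction on exactly two layers and its maximum attainable sum is $2$, a satisfied voter $V_e$ forces the rules assigned on its two active layers to be exactly its two endpoints, so each satisfied voter can be charged to the unordered pair of layers on which it is active. For a fixed rule assignment, the rules on a pair $(c,c')$ are determined, hence at most one edge-voter is satisfied on that pair; consequently at most $\binom{k}{2}$ voters can be satisfied at once. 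Reaching $\alpha=\binom{k}{2}$ therefore forces, for every pair $c<c'$, that layers $c$ and $c'$ carry class-correct vertices joined by an edge, i.e. that the $k$ selected vertices form a multicolored clique.

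The main obstacle I anticipate is the additive nature of the sum-model: a voter's score aggregates over all $t$ layers, so encoding the inherently conjunctive condition ``both endpoints selected'' is not directly expressible by a single additive score. The device that resolves this is to zero out every layer of an edge-voter except its two endpoint classes and to set $d=2$, which turns the additive score into a logical AND of two selections; the accompanying observation that distinct satisfied voters must occupy distinct pairs of layers is what caps the count at $\binom{k}{2}$ and makes $\alpha$ a function of $k$ alone, which is exactly what a parameterized reduction establishing W[1]-hardness requires.
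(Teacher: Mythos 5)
Your reduction is correct, but it takes a genuinely different route from the paper. The paper reduces from {\sc 3-Set Packing}: elements become voters, sets become rules, there are $t:=k'$ layers with all $k'$ votes of a voter identical, the threshold is $d:=1$, and $\alpha:=3k'$; the argument is that $k'$ chosen rules (sets) can cover at most $3k'$ distinct elements, with equality exactly when the sets are pairwise disjoint. You instead reduce from {\sc Multicolored Clique} with layers as colors, rules as vertices, voters as edges, $d:=2$ and $\alpha:=\binom{k}{2}$, and your pair-counting argument (at most one edge-voter per unordered pair of layers can reach satisfaction $2$) is sound: reaching $\alpha=\binom{k}{2}$ forces every layer to carry a class-correct vertex and every pair of selected vertices to be adjacent. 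Both reductions keep $\alpha$ a function of the source parameter, use only dichotomous satisfactions, and run in polynomial time, so both establish W[1]-hardness. The main trade-off is that the paper's gadget sets $d=1$, which makes the sum of $0/1$ scores coincide with their maximum; this is what lets the paper reuse the identical construction verbatim for the max-model in Theorem~\ref{max-alpha}. Your $d=2$ gadget is inherently additive (it encodes a conjunction of two selections) and would not transfer to the max-model, though it arguably makes the role of the layers more transparent and avoids the degenerate feature of the paper's construction that all $t$ votes of a voter are identical.
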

\begin{proof}
We prove this theorem by giving a reduction from {\scshape 3-Set Packing}, which given a set of elements $X$, $X=\bigcup_{i\in [m']}x_{i}$, a set of sets $\mathcal{S}$, $\mathcal{S}=\bigcup_{i\in [n']}\mathcal{S}_i$, where $\forall i\in [n']$,$\mathcal{S}_i\subset X$, $|\mathcal{S}_i|=3$, and asks for a size-$k'$ subset $\mathcal{S}'$ of $\mathcal{S}$ that $\forall \mathcal{S}_i, \mathcal{S}_{i'}\in \mathcal{S}'$, $\mathcal{S}_i\cap \mathcal{S}_{i'}=\emptyset$.
It is known that {\scshape 3-set packing} is NP-hard and is W[1]-hard with respect to $k'$.
We construct an {\sc MECSR} instance~$(E=(C,V,R,t), \alpha, d)$ from~$((X, \mathcal{S}), k')$ where $|X|=m'$ and $|\mathcal{S}|=n'$ as follows:

For each element $x_i\in X$, we construct a voter $V_i$, $V=\bigcup_{i\in [m']}V_i$.
There are totally $k'$ layers, $t=k'$, $V_i=\bigcup_{j\in [k']}v_i^j$.
For each set $\mathcal{S}_i\in \mathcal{S}$, we construct a rule $r_k$, $R=\bigcup_{k\in [n']}\{r_i\}$.
For each vote $v_i^j (j\in [k'])$ and rule $r_k$, if the corresponding element $x_i\in \mathcal{S}_k$, ${\tt Sat}(v_i^j,r_k)$ is set to $1$; otherwise, ${\tt Sat}(v_i^j,r_k)$ is set to $0$.
$$
{\tt Sat}(v_i^j,r_k)=\left\{
\begin{aligned}
&&1&,& x_i \in S_k, \\
&&0 &,& x_i \notin S_k. \\
\end{aligned}
\right.
$$
Note that the $k'$ votes of one voter are all the same.
Let $\alpha:=3k'$, $d:=1$.
Now we prove that there is a size-$k'$ subset $\mathcal{S}'$ if and only if there is a solution of {\sc MECSR} instance with sum-model.

``$\Longrightarrow$'': Suppose that there is a size-$k'$ subset $\mathcal{S}'\subset \mathcal{S}$, where $\forall \mathcal{S}_i, \mathcal{S}_{i'}\in \mathcal{S}'$, $\mathcal{S}_i\cap S_{i'}=\emptyset$.
Let $\mathcal{S}'=\{\mathcal{S}_{\beta1}, \mathcal{S}_{\beta2},\cdots, \mathcal{S}_{\beta k'}\}$.
For the $j-$th layer, the rule $r_{\beta j}$ is assigned.
Let $R'$ be the set of rules corresponding to the sets in $S'$.
In this way, $\forall x_i\in \mathcal{S}_{\beta j}\subset \mathcal{S}'(j\in [k'])$, the satisfaction of vote $v_i^j$ with rule $r_{\beta j}$ is $1$, ${\tt Sat}(v_i^j,r_{\beta j})=1$.
So, the satisfaction of the voter $V_i$ which corresponds to the element $x_i$ in $\mathcal{S}'$ is at least $1$, that is, ${\tt Sat}(p,V_i)\geq {\tt Sat}(p,v_i^j,r_{\beta j})\geq 1=d$.
There are exactly $3k'$ elements in the set of $\mathcal{S}'$, because each two sets in $\mathcal{S}'$ do not share a common element and each set contains exactly $3$ elements, $|\mathcal{S}_i|=|\mathcal{S}_{i'}|=3$.
That is, there are at least $\alpha=3k'$ voters whose satisfaction is at least $d=1$.
Therefore, if $((X, \mathcal{S}), k')$ has a size-$k'$ subset $\mathcal{S}'$ of $\mathcal{S}$ satisfying $\forall \mathcal{S}_i, \mathcal{S}_{i'}\in \mathcal{S}'$, $\mathcal{S}_i\cap \mathcal{S}_{i'}=\emptyset$, {\sc MECSR} instance has a solution of rule assignment $R'$.

``$\Longleftarrow$'': Suppose that there is an assignment of rules for each layer in which there are at least $\alpha=3k'$ voters whose satisfaction is at least $d=1$.
Let $r_{\gamma j}\in R$ be the rule assigned to $j-$th layer, $j\in [k']$.
For each satisfied voter $V_i$, it must hold ${\tt Sat}(V_i)=\sum_{j\in [k']}{\tt Sat}(v_i^j, r_{\gamma j})\geq 1=d$.
It means that there are at least one vote $v_i^j$ satisfying ${\tt Sat}(v_i^j, r_{\gamma j})=1$.
Without loss of generality, let the $j'-$th layer satisfies ${\tt Sat}(v_i^{j'}, r_{\gamma j'})=1$, $(j'\in [k'])$.
In this way, the corresponding element $x_i$ must be in the set $S_{\gamma j'}$.
Since there are $k'$ layers and at least $\alpha=3k'$  satisfied voters, the corresponding elements must be in the $k'$ corresponding sets.
Because each set $\mathcal{S}_i$ contains exactly $3$ elements, $\forall i\in [n'], |\mathcal{S}_i|=3$.
So, there are exactly $3k'$ elements in exact $k'$ sets, denote the set of the $k'$ sets as $S'$.
Therefore, if {\sc MECSR} instance has a solution of rule assignment, $((X, \mathcal{S}), k')$ has a size-$k'$ subset $\mathcal{S}'$ of $\mathcal{S}$ satisfying $\forall \mathcal{S}_i, \mathcal{S}_{i'}\in \mathcal{S}'$, $\mathcal{S}_i\cap \mathcal{S}_{i'}=\emptyset$.
This completes the proof of this theorem.
\end{proof}

Next, we continue consider the parameterized complexity of {\sc MECSR} with sum-model with the number of voters $n$ as parameter.
When the satisfaction of each vote is set dichotomous, 0 or 1, according to the proof of Theorem~\ref{Sum-t}, we can do some modifications on the constructions of layers and get an intractable result when $n\leq t$.
The other condition when $n> t$, we get a tractable result by constructing an ILP.
Otherwise, we find out that even there are only two votes, it is NP-hard to find out an acceptable rule assignment.
So, we get the following theorem.

\begin{theorem}
\label{sum-n}
The {\sc MECSR} problem with sum-model is NP-hard even if there are only 2 voters, $n=2$.
When the satisfaction of each vote is set dichotomous, 0 or 1, the {\sc MECSR} problem with sum-model is W[2]-hard when $n\leq t$, and is FPT when $n> t$, with the number of voter $n$ as parameter.
\end{theorem}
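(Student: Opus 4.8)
The plan is to prove the three assertions separately, since they occupy different regimes and call for different tools: a weakly NP-hard reduction for the two-voter case, an integer-programming argument for the tractable regime $n>t$, and an adaptation of the dominating-set reduction behind Theorem~\ref{Sum-t} for the regime $n\le t$.

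For NP-hardness already at $n=2$, I would reduce from {\sc Partition}. Given positive integers $a_1,\dots,a_m$ with $\sum_i a_i=2S$, I build two voters, $m$ layers (one per $a_i$), and two rules $r_1,r_2$, setting ${\tt Sat}(v_1^i,r_1)=a_i$, ${\tt Sat}(v_2^i,r_1)=0$, ${\tt Sat}(v_1^i,r_2)=0$, ${\tt Sat}(v_2^i,r_2)=a_i$, with $d:=S$ and $\alpha:=2$. A rule assignment is then exactly a choice of a side for each $a_i$, and since ${\tt Sat}(V_1)+{\tt Sat}(V_2)=2S$ for every assignment, both voters clear the threshold $S$ if and only if the integers split into two equal-sum halves. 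This reduction is intrinsically non-dichotomous (the weights $a_i$ are arbitrary), which is precisely why it does not collide with the tractable dichotomous case below.

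For the tractable regime $n>t$ with dichotomous satisfaction, I would use integer linear programming. When ${\tt Sat}(v_i^j,r)\in\{0,1\}$, the only relevant datum of a rule on layer $j$ is its satisfaction vector $(\,{\tt Sat}(v_1^j,r),\dots,{\tt Sat}(v_n^j,r)\,)\in\{0,1\}^n$, so each layer offers at most $2^n$ distinct useful choices. I introduce a $0/1$ variable $x_{j,s}$ for each layer $j$ and each achievable vector $s$, require $\sum_s x_{j,s}=1$ per layer, and express the total satisfaction of voter $i$ as $\sum_{j,s} x_{j,s}\,s_i$; after guessing a target set $A$ of at least $\alpha$ accepting voters (at most $2^n$ guesses) I impose $\sum_{j,s} x_{j,s}\,s_i\ge d$ for every $i\in A$. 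The program has at most $t\cdot 2^n$ variables, which in the regime $n>t$ is bounded by $n\cdot 2^n$, a function of $n$ alone; hence Lenstra's algorithm decides feasibility in FPT time in $n$, and iterating over the $2^n$ guesses keeps the whole procedure FPT in $n$.

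The main obstacle is the W[2]-hardness for $n\le t$ under dichotomous satisfaction. Here I would revisit the reduction of Theorem~\ref{Sum-t} (vertices to voters, potential dominators to rules, dominating-set size $k'$ to layers), which naturally outputs instances with $n=m'>t=k'$, and modify the layer construction by appending neutral layers on which all rules yield equal satisfaction, pushing the instance into the regime $n\le t$ without altering which assignments make every voter accept $p$. The delicate point, and the step I expect to be hardest, is carrying out this padding while keeping the parameter genuinely tied to the dominating-set parameter, so that the construction is a valid parameterized reduction witnessing W[2]-hardness with respect to $n$ rather than merely with respect to $t$; getting the threshold $d$ and the voter/layer bookkeeping right under the sum-model is where the argument must be engineered carefully.
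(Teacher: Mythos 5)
Your first two parts coincide with the paper's own proof almost exactly. The two-voter NP-hardness via \textsc{Partition} (complementary weights $a_i$ versus $0$ under the two rules, threshold $d$ equal to half the total weight, $\alpha=2$, and the observation that the two voters' totals always sum to the full weight) is precisely the paper's reduction. The FPT argument for $n>t$ is also the paper's: guess the set of accepting voters ($2^n$ guesses), collapse the rules on each layer to their achievable satisfaction vectors in $\{0,1\}^n$ (the paper's ``rule types''), impose one rule per layer and the per-voter threshold as ILP constraints over at most $t\cdot 2^n\le n\cdot 2^n$ zero--one variables, and solve by Lenstra's algorithm.

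The genuine gap is the third part, the W[2]-hardness for $n\le t$. You describe the same padding of the Theorem~\ref{Sum-t} construction with neutral (all-zero) layers that the paper uses, and you correctly diagnose the real obstacle: in that construction $n=m'$ is the number of vertices of the \textsc{Dominating Set} instance, which is not bounded by any function of the parameter $k'$, so the padded construction is not by itself a parameterized reduction with respect to $n$. But you then explicitly defer resolving this, and that deferred step is the entire content of the claim, so as written nothing is proved for this regime. For comparison, the paper closes the argument indirectly rather than by exhibiting a reduction in the parameter $n$: it pads with $\lambda\ge m'-k'$ zero layers so that $n\le t'$, notes the padding preserves the answer, asserts that the padded family remains W[2]-hard with respect to $t$, and then uses the implication that on instances with $n\le t$ any $f(n)\cdot|I|^{O(1)}$ algorithm is also an $f(t)\cdot|I|^{O(1)}$ algorithm, so fixed-parameter tractability in $n$ would transfer to $t$. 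If you take this route you must still justify the intermediate assertion that hardness in $t$ survives the padding (after padding, $t'=k'+\lambda\ge m'$ is itself no longer bounded by $k'$, so this is not automatic either); your instinct that the bookkeeping here is the delicate point is well founded, but flagging the difficulty is not the same as overcoming it.
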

\begin{proof}
We firstly prove the result that {\sc MECSR} problem with sum-model is NP-hard even if there are only 2 voters, $n:=2$, by reducing {\sc Partition} problem to {\sc MECSR}. 
Given a set $X$ of $n'$ elements, where each element $x_i \in X$ is associated with a value $s_i \in S$, the problem asks for a partition of elements in $X$ into two disjoint subsets $X_1$ and $X_2$, $X=X_1\cup X_2$, $X_1\cap X_2=\emptyset$, such that the sum of values assigned to the elements in $X_1$ is equal to the sum of values assigned to the elements in $X_2$, $\sum_{x_i\in X_1}s_i = \sum_{x_{i'}\in X_2}s_{i'}$.
It is well-known that {\sc Partition} problem is NP-hard~\cite{ref_book26}.
We construct an {\sc MECSR} instance~$(E=(C,V,R,t), \alpha, d)$ from~$({X}, S)$ as follows:

We construct two voters $V_1$, $V_2$ and two rules $r_1$,  $r_2$, $V=V_1 \bigcup V_2$, $R=\{r_1, r_2\}$. 
There are $n'$ layers in total, $t:= n'$. 
For each layer of voter $V_1$, the satisfaction of vote $v_1^j(j\in [n'])$ with $r_1$ is set to $s_j$, and the satisfaction of vote $v_1^j$ with $r_2$ is set to 0. 
For each layer of voter $V_2$, the satisfaction of vote $v_2^j$ with rule $r_1$ is set to 0, and the satisfaction of vote $v_2^j$ with rule $r_2$ is set to $s_j$. 
$$
{\tt Sat}(v_i^j,r_k)=\left\{
\begin{aligned}
&&s_i &,& i = 1, r_k = r_1, j\in [k'], \\
&&0 &,& i = 1, r_k = r_2, j\in [k'], \\
&&0 &,& i = 2, r_k= r_1, j\in [k'], \\
&&s_i &,& i = 2, r_k = r_2, j\in [k'].\\
\end{aligned}
\right.
$$
Note that for the $j-$th layer, if the rule $r_1$ is chosen, then the satisfaction of $V_1$ is improved by $s_j$; otherwise, the rule of $r_2$ is chosen, and the satisfaction of $V_2$ is improved by $s_j$.
We set $\alpha:=2$, $d:= \frac{1}{2}N=\sum_{x_i\in X}s_i$.
It means each layer can be assigned with $r_1$ or $r_2$, corresponding to the assignment of elements to either $X_1$ or $X_2$; and $V_1$ and $V_2$ are both satisfied with a value $d=\frac{1}{2}N$, corresponding to partition $X$ into $X_1$, $X_2$, and the total value of elements in $X_1$ and $X_2$ are both $d=\frac{1}{2}N$.
Therefore, there is a partition for $(X,S)$ if and only if there is a solution of~$(E=(C,V,R,t), \alpha, d)$.

Next, we show the reason of why {\sc MECSR} problem is W[2]-hard with respect to the number of voter $n$ when $n\leq t$.
According to the proof of Theorem~\ref{Sum-t}, when $n=m'$ ($m'$ is the number of vertex) and $t=k'$ ($k'$ is the size of dominating set), {\sc MECSR} problem is w[2]-hard with respect to the number of layers $t$.
We can do the following modifications to the proof in Theorem~\ref{Sum-t}:
\begin{itemize}
    \item Add $\lambda$ layers for each voter($\lambda \geq m'-k'$);
    \item Set the satisfaction of vote $v_i^j$ with each rule is 0, ${\tt Sat}(v_i^j,r)=0, i\in [m'], j\in [k'+1, k'+\lambda], r\in R$. 
\end{itemize}
Let $R':=R, d':=d, t'=t+\lambda \geq m'$, and $(E'=(C',V',R',t'), \alpha', d')$ be the modified {\sc MECSR} instance.
It holds $n=m'\leq k'+\lambda=t'$.
Since the added layers have no influence on the the solution of this problem (the satisfaction of each added layers is 0).
So, the modified instance $(E'=(C',V',R',t'), \alpha', d')$ has a solution if and only if the original instance $(E=(C,V,R,t), \alpha, d)$ has a rule assignment solution.
This means {\sc MECSR} is w[2]-hard with respect to the number of $t$ when $n\leq t$.
Since if a problem is FPT with respect to $n$, this problem must be FPT with repect to $t$ when $n\leq t$.
Therefore, {\sc MECSR} problem is W[2]-hard with respect to the number of $n$ when $n\leq t$.

In the following, we show the FPT result when $n>t$.
Firstly, we can enumerate all conditions whose satisfaction achieve the threshold $d$.
The number of all conditions is $2^n$, and we consider the conditions of voters when the number of satisfied voter is at least $\alpha$.
For each of this condition, we construct an ILP formulation.
Let $V'$ be the set of voter whose satisfaction is
at least $d$.
We say two rules are of the same type if they can make the same voters satisfied in $j-$th layer($j\in [t]$).
Let $RT$ be the set of all rule types.
There are $n$ voters and $t$ layers in total, so there are at most $2^n\times t$ different rule types, $|RT|\leq 2^n\times t$.
For each rule type $rt\in RT$, let $n_{rt}$ be the number of rules in $R$ of type $rt$.
Let $f(j, rt)$ be the set of index of the voters whose satisfaction of vote $v_{i}^{j}$ is 1 with the rule $r$ of type $rt$, $i\in f(j,rt)$, ${\tt Sat}(v_{i}^{j},r)=1$.
If $i\in f(j,rt)$, we define $h(i,j,rt)=1$; otherwise $h(i,j,rt)=0$.
In the following, we define the variables of ILP.
For each rule type $rt$, we define an integer variable $x_{j,rt}$, where $x_{j,rt}\in \{0,1\}$ that $x_{j,rt}=c$ means there are $c$ rules of type $rt$ assigned to $j-$th layer.
The ILP instance consists of the following constraints:
\begin{displaymath} 
\begin{aligned}
1.\ &\sum_{rt\in RT}x_{j,rt}= 1, j\in [t]; \nonumber\\
2.\ &\sum_{rt\in RT}\sum_{j\leq [t]}x_{j, rt}\times h(i,j, rt)\geq d, \forall v_i\in V';\nonumber\\
3.\ &x_{j,rt}=0, 1, j\in [t], \forall rt\in RT. \nonumber\\
\end{aligned}
\end{displaymath}
The first equality guarantees that for each layer, there is exactly one rule assigned to this layer.
The second inequality means the chosen rules can make all voters in $V'$ satisfied.
Therefore, the solution of the ILP instance gives a rule assignment to make all voter in $V'$ satisfied.
The number of the variable is in $O(2^n\times t\times t)$.
For each condition, we construct such an ILP and there are totally $2^n$ conditions.
Since $n>t$, we can solve this problem in $O^*(2^n\times 2^n\times n\times n)$.
Therefore, the {\sc MECSR} problem is FPT with respect to $n$ when $n>t$.
This completes the proof of this theorem.
\end{proof}

\subsection{Complexity with Max-model}
In this section, we show the complexity results of {\sc MECSR} problem with max-model.
In max-model, the satisfaction of a voter is the maximal satisfaction from all $t$ votes, ${\tt Sat}(V_i)=max\{{\tt Sat}(v_i^j)\}(j\in [t])$.
Therefore, by comparing the satisfaction value $s$ of each vote to the threshold $d$, we can assign the satisfaction value as follows: if $s \geq d$, the satisfaction value is set to $1$; if $s < d$, the satisfaction value is set to $0$. Here, we set $d=1$.
Therefore, according to Theorem~\ref{Sum-t} and Theorem~\ref{sum-alpha}, we can directly obtain the following results: When the satisfaction of each vote is set to either $1$ or $0$, and the threshold $d$ is set to $1$, the {\sc MECSR} problem with the max-model has a solution if and only if the {\sc MECSR} problem with the sum-model has a solution.
\begin{theorem}
\label{max-t}
The {\sc MECSR} problem with max-model is NP-hard, is W[2]-hard with respect to the number of layers $t$.
\end{theorem}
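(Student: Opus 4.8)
The plan is to reuse the reduction from \textsc{Dominating Set} constructed in the proof of Theorem~\ref{Sum-t} essentially verbatim, appealing to the equivalence between the sum-model and the max-model under binary vote satisfactions that was noted immediately before the statement. First I would observe that the instance built in Theorem~\ref{Sum-t} already has every vote satisfaction lying in $\{0,1\}$, namely ${\tt Sat}(v_i^j,r_k)\in\{0,1\}$, with threshold $d=1$. Hence no binarization step is needed: the constructed instance is literally of the special form to which the sum/max equivalence applies, so the very same construction can serve as the reduction for the max-model.

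The key step is the per-voter comparison of the two acceptance criteria on this instance. Fix an arbitrary rule assignment, let $r_{j'}$ be the rule assigned to layer $j$, and fix a voter $V_i$. Since each ${\tt Sat}(v_i^j,r_{j'})\in\{0,1\}$, the sum-model acceptance condition $\sum_{j=1}^{t}{\tt Sat}(v_i^j,r_{j'})\geq 1$ holds exactly when at least one layer contributes a $1$, which is identical to the max-model condition $\max_{j}\{{\tt Sat}(v_i^j,r_{j'})\}\geq 1$. Thus any fixed rule assignment accepts precisely the same set of voters under both models, and in particular it is a valid solution of the max-model instance if and only if it is a valid solution of the sum-model instance.

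Combining these observations, the max-model instance admits a rule assignment accepting at least $\alpha$ voters if and only if the sum-model instance does, which by Theorem~\ref{Sum-t} holds if and only if $\mathcal{G}$ has a size-$\leq k'$ dominating set. Because the reduction preserves the relation $t=k'$ without touching any satisfaction value, both NP-hardness and W[2]-hardness with respect to the number of layers $t$ transfer directly from Theorem~\ref{Sum-t} to the max-model.

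I expect no substantive obstacle here; the argument is essentially a corollary of Theorem~\ref{Sum-t}. The only point that genuinely requires checking is that the Theorem~\ref{Sum-t} construction indeed outputs binary satisfactions together with $d=1$ (it does, directly from its definition of ${\tt Sat}(v_i^j,r_k)$), so that the sum/max equivalence applies verbatim and the parameter $t$ is carried over unchanged.
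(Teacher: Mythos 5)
Your proposal is correct and matches the paper's own argument: the paper also derives Theorem~\ref{max-t} by observing that the Theorem~\ref{Sum-t} construction already uses dichotomous satisfactions with $d=1$, under which the sum-model condition $\sum_j {\tt Sat}(v_i^j)\geq 1$ and the max-model condition $\max_j {\tt Sat}(v_i^j)\geq 1$ coincide, so the Dominating Set reduction and the parameter $t=k'$ carry over unchanged.
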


\begin{theorem}
\label{max-alpha}
The {\sc MECSR} problem with max-model is W[1]-hard with respect to the number of satisfied voters $\alpha$.
\end{theorem}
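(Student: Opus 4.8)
The plan is to piggyback on the W[1]-hardness reduction already established for the sum-model in Theorem~\ref{sum-alpha} and transfer it to the max-model via the satisfaction-equivalence observation stated just before Theorem~\ref{max-t}. First I would recall that the reduction from {\scshape 3-Set Packing} used in Theorem~\ref{sum-alpha} produces an {\sc MECSR} instance in which every vote satisfaction ${\tt Sat}(v_i^j,r_k)$ is dichotomous, taking value $1$ when $x_i\in \mathcal{S}_k$ and $0$ otherwise, and in which the threshold is set to $d=1$ and the number of required satisfied voters is $\alpha=3k'$.

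The key step is to invoke the equivalence: when each vote satisfaction is restricted to $\{0,1\}$ and $d=1$, a voter $V_i$ is accepted in the max-model exactly when at least one of its layers receives satisfaction $1$, which is precisely the condition ${\tt Sat}(V_i)=\sum_{j}{\tt Sat}(v_i^j)\geq 1$ that defines acceptance in the sum-model. Hence for the instance built in Theorem~\ref{sum-alpha}, a rule assignment makes at least $\alpha$ voters accept $p$ under the max-model if and only if it does so under the sum-model. Because the construction is unchanged, the parameter $\alpha=3k'$ is preserved and the reduction remains computable in polynomial time, so it is a valid parameterized (FPT) reduction with parameter $\alpha$.

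Putting these together, the existence of a size-$k'$ solution to the {\scshape 3-Set Packing} instance is equivalent to the existence of an accepting rule assignment for the max-model {\sc MECSR} instance, with $\alpha$ bounded by a function of $k'$. Since {\scshape 3-Set Packing} is W[1]-hard with respect to $k'$, this establishes that {\sc MECSR} with the max-model is W[1]-hard with respect to $\alpha$. I do not anticipate a genuine obstacle here: the only point requiring care is verifying that the Theorem~\ref{sum-alpha} construction indeed meets the dichotomous and $d=1$ hypotheses of the equivalence, which it does by inspection, so no new gadget or modification to the reduction is needed.
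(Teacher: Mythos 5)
Your proposal is correct and matches the paper's own argument: the paper likewise observes that the Theorem~\ref{sum-alpha} construction is dichotomous with $d=1$, under which max-model and sum-model acceptance coincide, and then obtains Theorem~\ref{max-alpha} directly from the same {\scshape 3-Set Packing} reduction with parameter $\alpha=3k'$. No discrepancy to report.
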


In the following, we continue to analyze the effect of the number of rules $\ell$, the number of $\alpha$, the number of voters $n$ on the complexity of {\sc MECSR} problem with max-model, and get the following result.

\begin{theorem}
\label{max-l}
The {\sc MECSR} problem with max-model is NP-hard even if there are only two rules.
\end{theorem}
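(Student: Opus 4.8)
The plan is to reduce from \textsc{3-SAT} (equivalently \textsc{CNF-SAT}). The observation that makes two rules suffice is the one already recorded just before Theorem~\ref{max-t}: in the max-model with binary satisfaction values and threshold $d=1$, a voter $V_i$ accepts $p$ exactly when at least one of its $t$ layers receives satisfaction $1$ under the rule assigned to that layer. With $R=\{r_1,r_2\}$, choosing the rule for a layer is a single binary decision, so a rule assignment is nothing but a truth assignment over the layers, and the acceptance condition of a voter is precisely the satisfaction of a clause. Because ${\tt Sat}(v,r)$ is part of the input, I can specify the satisfaction table freely, and this matches \textsc{CNF-SAT} verbatim; hence I expect a direct encoding rather than a clever gadget.

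Concretely, given a formula $\phi$ with variables $x_1,\dots,x_q$ and clauses $C_1,\dots,C_p$, I would construct one layer per variable ($t:=q$), one voter $V_i$ per clause ($n:=p$), and two rules $r_1,r_2$, identifying ``$r_1$ on layer $j$'' with ``$x_j$ true'' and ``$r_2$ on layer $j$'' with ``$x_j$ false''. The satisfaction table is set so that ${\tt Sat}(v_i^j,r_1)=1$ iff $x_j$ occurs positively in $C_i$, ${\tt Sat}(v_i^j,r_2)=1$ iff $x_j$ occurs negatively in $C_i$, and both values are $0$ when $x_j$ does not occur in $C_i$. Finally set $d:=1$ and $\alpha:=p$, requiring all voters to accept; the candidate set needs only to contain $p$.

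The correctness argument splits in the usual two directions. For ``$\Longrightarrow$'', a satisfying assignment of $\phi$ induces the rule assignment that picks $r_1$ on layer $j$ iff $x_j$ is true; each clause $C_i$ is satisfied by some literal, whose layer then scores $1$, so ${\tt Sat}(V_i)=\max_j {\tt Sat}(v_i^j)\ge 1=d$ for every $i$. For ``$\Longleftarrow$'', a rule assignment making all $p$ voters accept induces the truth assignment setting $x_j$ true iff layer $j$ received $r_1$; each accepting voter $V_i$ has a layer scoring $1$, which by construction corresponds to a literal of $C_i$ that evaluates to true, so every clause is satisfied and $\phi$ is satisfiable.

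I do not anticipate a genuine obstacle; the only points to handle carefully are bookkeeping. First, the reduction of the max-model to ``some layer scores $1$'' under binary satisfaction with $d=1$ is exactly the equivalence noted before Theorem~\ref{max-t}, and a layer on which both rules could score $1$ never arises because a single variable contributes only one literal to a clause. Second, and this is the reason the result cannot simply be inherited from Theorem~\ref{Sum-n and l}, that earlier two-rule construction uses threshold $d=m'\neq 1$, so the sum/max equivalence does not apply there and a fresh, genuinely max-flavored reduction such as the one above is required.
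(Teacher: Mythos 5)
Your reduction is correct and is essentially identical to the paper's own proof of Theorem~\ref{max-l}: both reduce from \textsc{3-SAT} with one layer per variable, one voter per clause, rules $r_1/r_2$ encoding true/false, the same satisfaction table, and $d=1$ with all voters required to accept. No gaps.
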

\begin{proof}
We prove this theorem by giving a reduction from {\scshape 3-SAT} problem, which given a set of boolean variables $X$ and a set of clauses $\mathcal{C}$ where each clause $\mathcal{C}_i\in \mathcal{C}$ is of the form: $\mathcal{C}_i=x_j\vee x_{j'}\vee x_{j''}$ with $x_j, x_{j'}, x_{j''}\in \{x, \overline{x}\}(x\in X)$, and asks for an assignment of all variables to makes all clauses true. 
It is well-known that $3$-SAT problem is NP-hard.
We construct an {\sc MECSR} instance~$(E=(C,V,R,t), \alpha, d)$ from~$(X,\mathcal{C})$ where $|X|=m'$ and $|\mathcal{C}|=n'$ as follows.

For each clause $\mathcal{C}_i \in \mathcal{C}(i \in [n'])$, we construct a voter $V_i$, $V=\bigcup_{i \in [n']} V_i$.
There are $m'$ layers in total, $t$ := $m'$. 
For each voter $V_i$, we construct $m'$ votes $V_i^j(j \in [m'])$, which are corresponding to $m'$ boolean variable $x_j \in X$ one by one. 
There are two rules $r_1$ and $r_2$, $R=\{r_1, r_2\}$.  
For each vote $v_i^j$, if the boolean variable ${x_j}$ occurs in clause $\mathcal{C}_i$, the satisfaction of $v_i^j$ is set to 1 with $r_1$ and is set to 0 with $r_2$, if the boolean variable $\overline{x_j}$ occurs in clause $\mathcal{C}_i$, the satisfaction of $v_i^j$ is set to 0 with $r_1$ and is set to 1 with $r_2$, if neither of $x_j$ and $\overline{x_j}$ occur in $\mathcal{C}_i$, the satisfaction of $v_i^j$ is set to 0 with $r_1$ or $r_2$.
$$
{\tt Sat}(v_i^j,r)=\left\{
\begin{aligned}
&&1 &,& r=r_1, x_j \in \mathcal{C}_i\ {\tt or}\ r=r_2, \overline{x_j}\in \mathcal{C}_i, \\
&&0 &,& r=r_1, \overline{x_j} \in \mathcal{C}_i \ {\tt or}\ r=r_2, x_j\in \mathcal{C}_i, \\
&&0 &,&x_j\notin \mathcal{C}_i \ {\tt and}\ \overline{x_j}\notin \mathcal{C}_i. 
\end{aligned}
\right.
$$
It means that if setting $x_j$ to true makes $\mathcal{C}_i$ true, then ${\tt Sat}(v_i^j,r_1)=1$, ${\tt Sat}(v_i^j,r_2)\\=0$; if setting $x_j$ to false makes $\mathcal{C}_i$ true, then ${\tt Sat}(v_i^j,r_2)=1$, ${\tt Sat}(v_i^j,r_1)=0$; if the value of $x_j$ have no influence on $\mathcal{C}_i$, then ${\tt Sat}(v_i^j,r_1)={\tt Sat}(v_i^j,r_2)=0$.
Let $d := 1$, $\alpha:=n'$.
Now we prove that there is an assignment of each boolean variable in $X$ to make all clauses in $\mathcal{C}$ true for $(X,\mathcal{C})$ if and only if there is a rule assignment of {\sc MECSR} with max-model.

``$\Longrightarrow$": Suppose there is an assignment of all variables which makes all clauses true.
For the $j-$th layer($j\in [m']$), if the corresponding boolean variable $x_j$ is set to true (or $1$), $r_1$ is assigned to this layer; otherwise, if the corresponding variable $x_j$ is set to false (or $0$), $r_2$ is assigned to this layer.
The rule assignment is denoted as $R'$.
Since the assignment of $X$ can make all clause true, for each clause $\mathcal{C}_i=x_j\vee x_{j'}\vee x_{j''}(x_j, x_{j'}, x_{j''}\in \{x, \overline{x}\}, x\in X)$, at least one of $x_j$, $x_{j'}$ and $x_{j''}$ is true.
Without of generality, we say $x_{k}(k \in \{j, {j'}, {j''} \})$ is true. If $x_{k}$ is in the form of $x$,
Then, the rule $r_1$ is assigned to this layer, and the satisfaction of $v_i^k$ must be $1$. Otherwise, if $x_{k}$ is true with the form of $\overline{x}$, then the rule $r_2$ is assigned to this layer, and the satisfaction of $v_i^{k}$ is $1$ as well.
Since all clause are true according to the assignment of $X$, for each voter $V_i$ (corresponding to each clause), it holds:
\begin{equation*}
\begin{split}
     {\tt Sat}(V_i)=max\{{\tt Sat}(v_i^j)\}\geq {\tt Sat}(v_i^k,r)=1=d.
\end{split}
\end{equation*}
Therefore, $R'$ is a solution of {\sc MECSR} with max-model.

``$\Longleftarrow$": Supposed there is a rule assignment $R'$ of {\sc MECSR}.
Since $\alpha=n'=|V|$, the satisfaction of each voter is at least $d=1$.
It means, for each voter $V_i$, it holds ${\tt Sat}(V_i))=max\{{\tt Sat}(v_i^j)\}\geq d=1$.
For the $j-$th layer($j\in [m']$), if $r_1$ is assigned to this layer in $R'$, the corresponding boolean variable $x_j$ is set to true; otherwise, $r_2$ is assigned to this layer in $R'$, the corresponding boolean variable is set to false.
Let $X'$ be the boolean variables assignment.
Since it holds ${\tt Sat}(p,V_i))=max\{{\tt Sat}(v_i^j)\}\geq d=1$, there are at least one layer $j'$, the satisfaction of $v_i^{j'}$ is $1$.
For each voter $V_i$, whose corresponding clause is $\mathcal{C}_i=x_j\vee x_{j'}\vee x_{j''}(x_j, x_{j'}, x_{j''}\in \{x, \overline{x}\}, x\in X)$, only the $j-$th, $j'-$th, and $j''-$th layers can receive a satisfaction of value $1$.
Without of generality, we set the satisfaction of vote $v_i^k(k \in \{j, {j'}, {j''} \})$ to $1$. 
That is:
\begin{itemize}
    \item If the satisfaction of vote $v_i^k$ with the rule $r_1$ is $1$, then $x_k$ is the form of $x$ and $x_k$ is true, so the clause $\mathcal{C}_i$ must be true;
    \item If the satisfaction of vote $v_i^k$ with the rule $r_2$ is $1$, $x_k$ is the form of $\overline{x}$ and $x_k$ is false, therefore the clause $\mathcal{C}_i$ must be true. . 
\end{itemize}
So, the corresponding clause $\mathcal{C}_i$ of voter $V_i$ must be true.
Therefore, $X'$ is a solution of $(X,\mathcal{C})$ to make all clause true.
This completes the proof of this theorem.
\end{proof}

Next, we continue consider the parameterized complexity of {\sc MECSR} with max-model with the number of voters $n$ as parameter.
According to the proof of theorem~\ref{sum-n}, we can change the second inequality to $\sum_{rt\in RT}\sum_{j\leq [t]}x_{j, rt}\times h(j, rt)\geq 1, \forall v_i\in V'$, to make all voters in $V'$ satisfied.
So, we get the following theorem.

\begin{theorem}
\label{max-n}
The {\sc MECSR} problem with max-model is W[2]-hard when $n\leq t$, and is FPT when $n> t$, with the number of voter $n$ as parameter.
\end{theorem}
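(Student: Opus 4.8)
The plan is to handle the two regimes separately, reusing almost verbatim the machinery built for the sum-model in Theorem~\ref{sum-n}. The enabling observation, already recorded just before Theorem~\ref{max-t}, is that when every vote satisfaction is $0/1$ and the threshold is $d=1$, the max-model and the sum-model accept exactly the same voters: a voter is accepted iff at least one of its layers contributes satisfaction $1$, which is simultaneously ``$\max \ge 1$'' and ``$\sum \ge 1$''. Hence any $0/1$, $d=1$ instance admits a rule assignment for the max-model iff it does for the sum-model, and every reduction of that shape transfers directly.

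For the hardness regime $n \le t$, I would invoke this equivalence. The Dominating Set reduction of Theorem~\ref{Sum-t} already uses only $0/1$ satisfactions with $d=1$, so it is at once a reduction to the max-model and a proof of W[2]-hardness with respect to $t$. I would then apply the same padding trick as in Theorem~\ref{sum-n}: append $\lambda \ge m'-k'$ extra layers to every voter, all with satisfaction $0$ under every rule. These layers change neither the max-value nor the set of accepted voters, so the instance stays equivalent, while now $n = m' \le k'+\lambda = t$. Since the problem is W[2]-hard with respect to $t$ under the constraint $n \le t$, and any FPT algorithm in $n$ would (with a monotone $f$) also be FPT in $t$ whenever $n\le t$, W[2]-hardness with respect to $n$ follows by contraposition.

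For the tractable regime $n > t$, I would adapt the ILP of Theorem~\ref{sum-n}. First reduce to $0/1$ satisfactions by thresholding each vote value against $d$ (set to $1$ if $\ge d$, else $0$) and resetting $d:=1$; this preserves max-model acceptance. Then enumerate, over all $2^n$ subsets, the candidate set $V'$ of at least $\alpha$ accepted voters. For each $V'$ I would build the same integer program as before: a binary variable $x_{j,rt}$ for each layer $j$ and rule type $rt$, the constraint $\sum_{rt} x_{j,rt}=1$ forcing one rule per layer, and---the only change from the sum-model---the acceptance constraint
\begin{equation*}
\sum_{rt\in RT}\sum_{j\in[t]} x_{j,rt}\,h(i,j,rt)\ \ge\ 1,\qquad \forall\, v_i\in V',
\end{equation*}
which encodes the existential ``at least one satisfied layer'' condition of the max-model. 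The number of rule types is at most $2^n t$, so the number of variables is bounded by a function of $n$ (using $t<n$), and each ILP is solvable in FPT time by Lenstra's algorithm; iterating over the $2^n$ guesses keeps the total running time of the form $f(n)\cdot|I|^{O(1)}$.

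The only delicate point is the acceptance constraint: I must argue that replacing the sum-model threshold $d$ by the constant $1$ faithfully captures max-model satisfaction \emph{after} the $0/1$ reduction, rather than silently re-encoding the additive condition. This is precisely the $\max\!=\!\sum\!=\!1$ coincidence above; once it is stated cleanly, correctness of both the transferred hardness instances and the ILP constraint is immediate, and the remaining bookkeeping (variable count and running-time bound) is identical to Theorem~\ref{sum-n}.
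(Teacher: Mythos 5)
Your proposal matches the paper's approach essentially verbatim: the paper likewise relies on the $0/1$, $d=1$ coincidence of max- and sum-model acceptance stated before Theorem~\ref{max-t}, obtains W[2]-hardness for $n\leq t$ by padding the Dominating Set reduction of Theorem~\ref{Sum-t} with $\lambda$ all-zero layers as in Theorem~\ref{sum-n}, and obtains the FPT result for $n>t$ by reusing that theorem's ILP with the acceptance constraint changed to $\sum_{rt}\sum_{j} x_{j,rt}\, h(i,j,rt)\geq 1$. No substantive differences; your argument is a faithful (and slightly more explicit) rendering of the paper's proof.
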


\subsection{Complexity with Min-model}
In this section, we show the complexity results of {\sc MECSR} problem with min-model.
In min-model, the satisfaction of a voter is the minimal satisfaction from all $t$ votes, ${\tt Sat}(V_i)=\min\{{\tt Sat}(v_i^j)\},j\in[t]$.
When all voters accept the winner with min-model ($\alpha=n$), we can obviously find out an acceptable rule assignment in polynomial-time, that is, examining all $\ell$ rules to check whether the rule can make the satisfaction of all votes reach $d$.
It runs in $O(n\times t\times \ell)$ time.
So, {\sc MECSR} problem with min-model is in P when $\alpha=n$.
So, we continue consider the condition where $\alpha < n$.
For the parameterized complexity with the number of voters $n$ as parameter, we can enumerate all conditions which voters are satisfied.
There are totally $O(2^n)$ conditions.
For each condition, we just need to make the satisfaction of each votes reach $d$.
So, we can get a rule assignment or there is no such acceptable rule assignment in $O(2^n\times n\times t\times \ell)$ time.
Therefore, {\sc MECSR} is FPT with the number of voters $n$ as parameter.
In the following, we show the details of intractable and tractable results when with the number of satisfied voters $\alpha$, the number of layers $t$, or the number of rules $\ell$ as parameter.

\begin{theorem}
\label{min t-alpha}
The {\sc MECSR} problem with min-model is NP-hard when $\alpha\leq n$, and is W[1]-hard with respect to the number of satisfied voters $\alpha$ and the number of layers $t$.
\end{theorem}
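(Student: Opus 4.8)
The plan is to prove both hardness claims with a single reduction from {\sc Multicolored Clique}, which is W[1]-hard with respect to the clique size $k'$ and NP-hard in general. Recall that in the min-model a voter accepts $p$ only when \emph{every} one of its $t$ votes reaches the threshold; since the construction may be dichotomous I fix $d:=1$, so a voter is satisfied exactly when all of its layers receive a rule giving satisfaction $1$ (a conjunction over layers). The key design choice is to set both the number of layers and the target number of satisfied voters equal to the clique size, $t:=k'$ and $\alpha:=k'$. Then a parameterized reduction establishing hardness in $k'$ simultaneously witnesses W[1]-hardness in $\alpha$, in $t$, and in the combined parameter; and since {\sc Multicolored Clique} is NP-hard while $\alpha=k'\le n$, the same instance also yields the NP-hardness statement for $\alpha\le n$.

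Given a graph $\mathcal{G}=(\mathcal{V},\mathcal{E})$ whose vertices are partitioned into $k'$ color classes $\mathcal{V}_1,\dots,\mathcal{V}_{k'}$, I would build one voter $V_v$ per vertex $v\in\mathcal{V}$, one layer per color (intuitively, layer $c$ ``selects'' the clique vertex of color $c$), and one rule $r_u$ per vertex $u\in\mathcal{V}$. Writing $c_v$ for the color of $v$, the satisfactions would be
\[
{\tt Sat}(v_v^{\,c},r_u)=
\begin{cases}
1, & c=c_v \text{ and } u=v,\\
1, & c\neq c_v,\ u\in\mathcal{V}_c \text{ and } u\in N[v],\\
0, & \text{otherwise.}
\end{cases}
\]
Thus on its own-color layer voter $V_v$ is satisfied only by the rule $r_v$, while on every other color layer it is satisfied only by a rule that picks a same-color vertex lying in $N[v]$.

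For correctness I would argue in two directions. If $S=\{w_1\in\mathcal{V}_1,\dots,w_{k'}\in\mathcal{V}_{k'}\}$ is a multicolored clique, assigning $r_{w_c}$ to layer $c$ makes each $w_c$ satisfied on its own layer (rule $r_{w_c}$) and on every other layer $c'$ (since $w_c\in N[w_{c'}]$), giving $k'=\alpha$ satisfied voters. Conversely, suppose a rule assignment satisfies at least $\alpha=k'$ voters, and let $u_c$ be the vertex placed on layer $c$. Every satisfied voter $V_v$ must clear its own-color layer, which forces $u_{c_v}=v$; hence for each color \emph{at most one} voter can be satisfied, namely the one equal to the selected vertex $u_c$. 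Reaching $k'$ satisfied voters therefore forces exactly one per color, and the other-color conditions then force each selected vertex to lie in the closed neighborhood of every other, so $\{u_1,\dots,u_{k'}\}$ is a multicolored clique.

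The main obstacle I anticipate is ruling out degenerate rule assignments. I would show that any layer receiving a ``wrong-color'' rule (including reusing one vertex-rule across several layers) satisfies \emph{no} voter on that layer, so such assignments produce zero satisfied voters and are never optimal. Combined with the one-satisfied-voter-per-color counting, this is exactly what pins the $k'$ satisfied voters down to a genuine multicolored clique. Once this counting argument is secured the equivalence is immediate, and the single instance simultaneously witnesses NP-hardness (for $\alpha\le n$) and W[1]-hardness with respect to $\alpha$ and $t$.
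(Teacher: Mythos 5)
Your reduction is correct and is essentially the paper's own proof: both reduce from {\sc Multicolored Clique} with one voter per vertex, one layer per color, $d=1$, $\alpha=t=k'$, closed-neighborhood satisfactions, and the same ``own-color layer forces the selected vertex to equal the voter, hence at most one satisfied voter per color'' counting argument. The only (immaterial) difference is that you index rules by vertices ($|\mathcal{V}|$ rules, with wrong-color rules satisfying nobody), whereas the paper uses $q$ rules interpreted as ``the $k$-th vertex of the layer's color,'' letting the non-adjacency of same-colored vertices do the identification.
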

\begin{proof}
We prove this theorem by reducing from {\scshape Multi-color Clique} problem, in which we are given a multi-color graph $\mathcal{G}=(\mathcal{V},\mathcal{E})$, where each vertex is assigned with a color. The graph $\mathcal{G}$ has a total of $k'$ colors, and $q$ vertices with the same color. Additionally, no two vertices with the same color are adjacent in the graph.
The aim is to find out a size-$k'$ clique $CL$ that each two vertices in $CL$ are adjacent.
It is known that {\scshape Multi-color Clique} problem is NP-hard and is W[1]-hard with respect to the clique size $k'$.
We construct an {\sc MECSR} instance~$(E=(C,V,R,t), \alpha, d)$ from~$(\mathcal{G}=(\mathcal{V},\mathcal{E}), k')$ where each vertex is denoted as $v_{i,i'}$ meaning the $i'-$th vertex with the $i-$th color, $\mathcal{V}=\bigcup_{i\in [k'], i'\in [q]}v_{i,i'}$, $|\mathcal{V}|=q\times k'$ and $|\mathcal{E}|=n'$.

The main ideals of the constructions are as follows:
\begin{itemize}
    \item Construct a voter for each vertex;
    \item Construct a layer for each color;
    \item Construct a rule for the index of each vertex of a color.
\end{itemize}
Now, we show the detail of the constructions.
For each vertex $v_{i,i'}\in V(i\in [k'], i'\in [q])$, we construct a voter $V_{\sigma}$ with $ \sigma=q \times (i-1)+i'$, $V=\bigcup_{\sigma=1}^{k'\times q} V_i$.
There are $q$ rules in total, $R=\bigcup_{i \in [q]}r_i$.
There are $k'$ layers in total, $t:=k'$.
For each voter $V_{\sigma}$, we construct $k'$ votes, $V_{\sigma}=\bigcup_{j \in [k']} v_{\sigma}^j$.
For each vote $v_{\sigma}^j$ and rule $r_k$ with $\sigma=q \times (i-1)+i'$, if the corresponding vertices $v_{i,i'}$ and $v_{j,k}$ satisfy $v_{i,i'}\in N[v_{j,k}]$, then ${\tt Sat}(v_{\sigma}^j,r_k)$ is set to $1$; otherwise, the ${\tt Sat}(v_{\sigma},r_k)$ is set to $0$.
$$
{\tt Sat}(v_{\sigma}^j,r_k)=\left\{
\begin{aligned}
&&1&,& v_{i,i'} \in N[v_{j,k}], \\
&&0 &,& v_{i,i'} \notin N[v_{j,k}]. \\
\end{aligned}
\right.
$$
Let $\alpha:=k'$, $d:=1$.
At most one rule can be assigned to each layer, corresponding to at most one vertex can be chosen for each color in $\mathcal{G}$; at least $\alpha=k'$ votes are needed to be satisfied, corresponding to at least $k'$ vertices be chosen to constitute a clique; the minimal satisfaction value of each votes for the satisfied voters is at least $d=1$, corresponding to each two chosen vertices are adjacent in $\mathcal{G}$.
Therefore, there is a size-$k'$ clique in $\mathcal{G}$ if and only if there is a rule assignment solution of~$(E=(C,V,R,t), \alpha, d)$.

\end{proof}

Next, we continue consider the parameterized complexity of {\sc MECSR} with min-model with the number of rules $\ell$ as parameter.
According to the proof of theorem~\ref{min t-alpha}, we can do some modifications on the constructions of layers and get an intractable result when $\ell< t$.
The other condition when $\ell\geq t$, we get a tractable result.
So, we get the following theorem.

\begin{theorem}
\label{min-ell}
The {\sc MECSR} problem with min-model is W[1]-hard when $\ell< t$, and is FPT when $\ell\geq t$, with the number of rules $\ell$ as parameter.
\end{theorem}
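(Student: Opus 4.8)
Theorem~\ref{min-ell} claims a dichotomy: the {\sc MECSR} problem with min-model is W[1]-hard with respect to the number of rules $\ell$ when $\ell < t$, and is FPT in $\ell$ when $\ell \geq t$. I plan to attack the two halves independently, adapting machinery that is already in the excerpt.

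For the hardness side ($\ell < t$), the plan is to reuse the reduction from {\sc Multi-color Clique} given in the proof of Theorem~\ref{min t-alpha}. In that construction there are $q$ rules and $t = k'$ layers, one layer per color. The parameter there is $\alpha = t = k'$, which is fine, but I need $\ell$ to be the parameter and I need $\ell < t$ to hold. Since $q$ (the number of vertices per color) is part of the input and is generally much larger than the clique size $k'$, I would instead force a small rule set. The natural modification, hinted at by the sentence "we can do some modifications on the constructions of layers," is to blow up the number of layers while keeping the number of rules bounded by a function of the clique parameter: for instance, encode each vertex index $i' \in [q]$ in binary or unary across several dummy layers so that the number of distinct rules needed drops. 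First I would recast the multi-color clique instance so that selecting a vertex of a given color is represented by a choice among few rules replicated over many layers, padding with all-zero layers (as in the $\lambda$-layer trick of Theorem~\ref{sum-n}) until $t > \ell$; since padding layers carry satisfaction $0$ they are harmless in the min-model only if every padded layer is guaranteed a satisfying rule, so I must be careful that the dummy layers are uniformly satisfiable and do not destroy any voter's minimum. The hardness then follows because a parameterized reduction preserving W[1]-hardness in the clique size carries over to $\ell$.

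For the tractable side ($\ell \geq t$), the key observation is that in the min-model a voter is satisfied exactly when \emph{every} one of its $t$ votes meets the threshold under the rule assigned to that vote's layer. Thus a rule assignment is a function from the $t$ layers to $R$, and what matters for a voter is whether the assigned rule satisfies that voter in each layer simultaneously. The plan is to enumerate rule assignments cleverly. A naive enumeration costs $\ell^t$, which is not FPT in $\ell$; so I would instead group rules into \emph{types} per layer exactly as in the ILP argument of Theorem~\ref{sum-n}: two rules have the same type in layer $j$ if they satisfy the same set of voters in layer $j$. The number of distinct assignments that matter is bounded by the number of type-vectors across the $t$ layers, and since $t \leq \ell$, a careful counting should give a bound of the form $f(\ell)$ times a polynomial. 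Concretely, I would branch over which of the $\ell$ rules is assigned to each layer but collapse equivalent branches, obtaining at most $\ell^t \leq \ell^\ell = f(\ell)$ essentially-distinct assignments to test, and for each tentative assignment check in polynomial time whether at least $\alpha$ voters have all $t$ layer-satisfactions equal to $1$.

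The main obstacle I anticipate is the hardness direction, specifically making the layer-padding consistent with the min-model. Unlike the sum- and max-models, the min-model punishes \emph{any} unsatisfied layer, so I cannot freely insert all-zero layers: a voter touched by a zero-satisfaction layer would be rejected outright. I must therefore ensure every padding layer admits at least one rule giving satisfaction $1$ to every relevant voter, while still forcing the rule-count to stay below $t$; reconciling "few rules, many layers, and no layer can drag a voter's minimum to $0$" is the delicate part and will likely require designing padding layers whose satisfaction is $1$ under all rules (or at least under the rule the intended solution assigns there). Getting the parameter bookkeeping right—so that $\ell$ remains bounded by a function of $k'$ while $t$ grows—is where the construction must be engineered with care.
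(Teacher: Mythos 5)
Your FPT half is essentially the paper's: for $\ell \geq t$ one simply enumerates all $\ell^{t} \leq \ell^{\ell}$ rule assignments and checks each one in $O(n\times t\times \ell)$ time. The detour through per-layer rule types buys you nothing here --- you end up quoting the same $\ell^{\ell}$ bound the naive enumeration already gives --- and your passing claim that $\ell^{t}$ is ``not FPT in $\ell$'' is false precisely in the regime $t\leq\ell$ that this half covers.

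The gap is in the hardness half. You correctly isolate the one delicate point (all-zero padding layers are fatal in the min-model, since any such layer drags every voter's minimum to $0$) and you do, parenthetically, reach the paper's fix: add $\lambda$ extra layers whose satisfaction is $1$ under \emph{every} rule, which is vacuous for the minimum and inflates $t$ past $\ell$ without otherwise touching the reduction of Theorem~\ref{min t-alpha}. But you never say how W[1]-hardness is then transferred to the parameter $\ell$. Your proposed route --- re-encoding the vertex index $i'\in[q]$ so that $\ell$ stays ``bounded by a function of the clique parameter'' --- does not achieve that goal (binary encoding still leaves $\ell=O(\log q)$, which is unbounded in $k'$) and would force a redesign of the whole clique gadget. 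The paper does not bound $\ell$ by $f(k')$ either; it closes the argument indirectly, exactly as in the $n\leq t$ case of Theorem~\ref{sum-n}: on instances with $\ell<t$, an FPT algorithm in $\ell$ would in particular be an FPT algorithm in $t$, contradicting the W[1]-hardness with respect to $t$ established in Theorem~\ref{min t-alpha}. Without either that closing observation or a construction that genuinely bounds $\ell$ in terms of the source parameter, your hardness half does not go through.
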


\section{Conclusion}
In this paper, we study the \emph{Multi-votes Election Control By Selecting Rules} problem, which allows each voter to present different votes in each layers among the set of candidates.
We study the computational complexity of this problem from the viewpoint of constructive control by assigning rules to each layer to make a special candidate $p$ being an acceptable winner of the election.
We find out that this problem is NP-hard for sum-model, max-model, or min-model.
Furthermore, we get the results that it is NP-hard even if there are only two voters in sum-model, or there are only two rules in sum-model or max-model; it is intractable with the number of layers as parameter for all of the three models, and even the satisfaction of each voter is set as dichotomous, either $1$ or $0$, it is remains hard to find out an acceptable rule assignment.
We also get some other tractable and intractable results, including fixed-parameter tractable, W[1]-hard and W[2]-hard.

For the future work, at first, we just consider the constructive cases here, the destructive control case may be a meaningful work. 
And, it is interesting to analyze the complexity of making special candidate $p$ being the unique winner of the election.
This work needs to consider the satisfaction of each candidate and the format of votes, which are ignored in this paper.
And, it is also interesting to make a fixed-size set of candidates being an acceptable committee with other rules, such as PAV, CCAV, and SAV for approval voting.
Another promising directing for future work is to embed the uncertainty rules to other models, such as the iterative elections.


\begin{thebibliography}{8}
\bibitem{ref_article1}
Brams, S. J. and Kilgour, D. M.: Satisfaction approval voting. In Voting Power and Procedures, 323--346 (2014)

\bibitem{ref_article2}
Conitzer, V.: Making decisions based on the preferences of multiple agents. Communications of the ACM \textbf{53}(3), 84--94 (2010)

\bibitem{ref_article3}
Aziz, H., Biró, P., Gaspers, S., de Haan, R., Mattei, N., and Rastegari, B.: Stable matching with uncertain linear preferences. Algorithmica, \textbf{82}(5), 1410--1433 (2020)

\bibitem{ref_article4}
Chen, J., Niedermeier, R., and Skowron, P.: Stable marriage with multi-modal preferences.  In: Proceedings of the 2018 ACM Conference on Economics and Computation, pp. 269--286. Ithaca, NY, USA(2018)

\bibitem{ref_article5}
Miyazaki, S. and Okamoto, K.: Jointly stable matchings. J. Comb. Optim. \textbf{38}(2): 646-665 (2019)

\bibitem{ref_article6}
Wen, Y., Zhou, A., and Guo, J.: Position-based matching with multi-modal preferences. In: 21st International Conference on Autonomous Agents and Multiagent
Systems, pages 1373–1381, AAMAS, pp. 1373--1381 (2022)

\bibitem{ref_article7}
Jain, P. and Talmon, N.: Committee selection with multimodal preferences.  In: 2020 - 24th European Conference on Artificial Intelligence, Including 10th Conference on Prestigious Applications of Artificial Intelligence, pp. 123--130. (2020)

\bibitem{ref_article8}
Konczak, K. and Lan, J.: Voting procedures with incomplete preferences. In: Proceedings of the 9th International Conference on Autonomous Agents and Multiagent Systems, pp. 124--129. (2010)

\bibitem{ref_article9}
Bachrach, Y., Betzler, N., and Faliszewski, P.: Probabilistic possible winner determination. In: Proceedings of the Twenty-Fourth AAAI Conference on Artificial Intelligence, pp. 1436--144. (2010)

\bibitem{ref_article10}
Kocot, M., Kolonko, A., Elkind, E., Faliszewski, P., and Talmon, N. Multigoal committee selection. In: Proceedings of the Twenty-Eighth International Joint Con
ference on Artificial Intelligence, pp. 385--391. (2019)

\bibitem{ref_article11}
Baumeister, D., Roos, M., and Rothe, J. Computational complexity of two variants of the possible winner problem. In: 10th International Conference on Autonomous Agents and Multiagent Systems (AAMAS 2011), pp. 853--860. Taipei, Taiwan (2011)

\bibitem{ref_article12}
Elkind, E. and Erdelyi, G.: Manipulation under voting rule uncertainty. In: International Conference on Autonomous Agents and Multiagent Systems (AAMAS 2012), pp. 627--634. (2012)

\bibitem{ref_article13}
Conitzer, V., Walsh, T., and Xia, L.: Dominating manipulations in voting with partial information. In: Proceedings of the Twenty-Fifth AAAI Conference on Artificial Intelligence (AAAI 2011), pp. 45--50. (2011)

\bibitem{ref_article14}
Bartholdi, J., Tovey, C., and Trick, M.: How hard is it to control an election? Mathematical and Computer Modelling, 27--40(1992)

\bibitem{ref_article15}
Faliszewski, P. and Rothe, J.: Control and bribery in voting. Handbook of Computational Social Choice, 146--168 (2016)

\bibitem{ref_article16}
Baumeister, D. and Rothe, J.: Preference aggregation by voting. Economics and Computation, 197--325 (2016)

\bibitem{ref_article17}
Hemaspaandra, E., Hemaspaandra, L., and Rothe, J.: Anyone but him: The complexity of precluding an alternative. Artificial Intelligence, 255--285

\bibitem{ref_article18}
Russell, N.: Complexity of control of Borda count elections. Master's thesis, Rochester Institute of Technology (2007)

\bibitem{ref_article19}
Hemaspaandra, E. and Schnoor, H.: Dichotomy for pure scoring rules under manipulative electoral actions. In: ECAI, pp. 1071--1079. (2016)

\bibitem{ref_article20}
Neveling, M. and Rothe, J. Solving seven open problems of offline and online control in Borda elections. In \emph{AAAI}, pp. 3029--3035. (2017)

\bibitem{ref_article21}
Menon, V. and Larson, K.: Computational aspects of strategic behaviour in elections with top-truncated ballots. In: \emph{AAMAS}, pp. 1506--1547. (2017)

\bibitem{ref_article22}
Yang, Y.: On the complexity of Borda control in single-peaked elections. In: \emph{AAMAS}, pp. 1178--1186. (2017)

\bibitem{ref_book23}
Cygan, M., Fomin, F. V., Kowalik, L., Lokshtanov, D., Marx, D., Pilipczuk, M., Pilipczuk, M., and Saurabh, S. \emph{Parameterized Algorithms}. Springer (2015)

\bibitem{ref_book24}
Downey, R. G. and Fellows, M. R. \emph{Parameterized Complexity}. Springer. (1999)

\bibitem{ref_article25}
Robert Bredereck, Till Fluschnik, Andrzej Kaczmarczyk: When Votes Change and Committees Should (Not). IJCAI 2022: 144-150

\bibitem{ref_book26}
Richard M. Karp: Reducibility Among Combinatorial Problems. 50 Years of Integer Programming 2010: pp. 219-241(2010)
\end{thebibliography}





\newpage

\end{document}